%% file: main.tex
\documentclass[article]{llncs}
\input{preamble}

\title{Exact Algorithms for Edge Deletion to Cactus}

\author{Sheikh Shakil Akhtar \and
  Geevarghese Philip}

\institute{Chennai Mathematical Institute, Chennai, India}

\begin{document}

\maketitle

\begin{abstract}
  \input{abstract}
\end{abstract}

\section{Introduction}

\input{introduction}

\section{Preliminaries}

\input{preliminaries}

\section{Spanning tree to Cactus}
\input{tree-to-cactus}

\section{A Faster Exact Algorithm}
\input{new-exact-algo}

\section{Conclusion}
\input{conclusion}

\newpage

\bibliographystyle{splncs04}
\bibliography{references}

\newpage
\appendix
\section*{APPENDIX}

\input{appendix}

\end{document}

%% file: preamble.tex
\usepackage{xspace}
\usepackage{microtype}
\usepackage{framed}
\usepackage{xifthen}
\usepackage{tikz}
\usepackage{graphicx}
\usepackage{booktabs} 
\usepackage{hyperref}
\usepackage{mathtools}
\usetikzlibrary{calc}

\usepackage{algorithm}
\usepackage{algpseudocode}

\usepackage[capitalize,noabbrev]{cleveref}

\usepackage{amsmath}
\usepackage{amssymb}

\usepackage{enumitem}
\usepackage{thm-restate}

\newlength{\RoundedBoxWidth}
\newsavebox{\GrayRoundedBox}
\newenvironment{GrayBox}[1]%
   {\setlength{\RoundedBoxWidth}{.93\textwidth}
    \def\boxheading{#1}
    \begin{lrbox}{\GrayRoundedBox}
       \begin{minipage}{\RoundedBoxWidth}}%
   {   \end{minipage}
    \end{lrbox}
    \begin{center}
    \begin{tikzpicture}%
       \node(Text)[draw=black!20,fill=white,rounded corners,%
             inner sep=2ex,text width=\RoundedBoxWidth]%
             {\usebox{\GrayRoundedBox}};
        \coordinate(x) at (current bounding box.north west);
        \node [draw=white,rectangle,inner sep=3pt,anchor=north west,fill=white] 
        at ($(x)+(6pt,.75em)$) {\boxheading};
    \end{tikzpicture}
    \end{center}}     
\newenvironment{defproblemx}[2][]{\noindent\ignorespaces%
                                \FrameSep=6pt%
                                \parindent=0pt%
                \vspace*{-1.5em}
                \ifthenelse{\isempty{#1}}{%
                  \begin{GrayBox}{\textsc{#2}}%
                }{%
                  \begin{GrayBox}{\textsc{#2}  parameterized by~{#1}}%
                }
                \begin{tabular*}{\textwidth}{@{\hspace{.1em}} >{\itshape} p{1.8cm} p{0.8\textwidth} @{}}%
            }{
                \end{tabular*}%
                \end{GrayBox}%
                \ignorespacesafterend
            }  

\usepackage[textsize=large]{todonotes}


%% file: abstract.tex



We study two related problems on simple, undirected graphs: {\sc Edge Deletion
to Cactus} and {\sc Spanning Tree to Cactus}.

{\sc Edge Deletion to Cactus} has been known to be NP-hard on general graphs at
least since 1988. In their recent work where they showed---\emph{inter
alia}---that {\sc Edge Deletion to Cactus} is NP-hard even when restricted to
bipartite graphs, Koch et al. (Disc. Appl. Math., 2024) posed the complexity of
{\sc Spanning Tree to Cactus} as an open problem. Our first main result is that
{\sc Spanning Tree to Cactus} can be solved in polynomial time.


We use the polynomial-time algorithm for {\sc Spanning Tree to Cactus} as a
black box to obtain an exact algorithm that solves {\sc Edge Deletion to Cactus}
in \(\mathcal{O}^{\star}(n^{\,n-2})\) time, where \(n\) is the number of
vertices in the input graph and the \(\mathcal{O}^{\star}()\) notation hides
polynomial factors. This is a significant improvement over the brute-force
algorithm for {\sc Edge Deletion to Cactus} that enumerates all subsets of the
edge set.

We further improve the running time for {\sc Edge Deletion to Cactus} to
\(\mathcal{O}^{\star}(3^{n})\) using dynamic programming; this is our second
main result. To the best of our knowledge, this is the fastest known
exact-exponential algorithm for {\sc Edge Deletion to Cactus}.

%% file: introduction.tex
A \emph{cactus graph} is a connected graph in which every edge belongs to at most one cycle.
Equivalently, any two simple cycles in a cactus share at most one vertex.
Another equivalent characterization is that every block (biconnected component) of the graph is
either a single edge or a simple cycle.
This characterization implies that cactus graphs can be recognized in polynomial time using a block
decomposition algorithm, such as the classical linear-time algorithm for computing biconnected
components by Hopcroft and Tarjan \cite{10.1145/362248.362272}.
Cactus graphs form a natural generalisation of trees and arise in several areas of graph theory,
network design, and algorithmic graph modification problems \cite{Harary1953,BRIMKOV2017393}.

In this paper, we study two closely related edge modification problems whose objective is to
transform a given connected graph into a cactus by deleting or selecting edges.

\vspace{2.5mm}

\noindent\fbox{\begin{minipage}{\textwidth}
{\sc Spanning Tree To Cactus}\\
Input: A simple, undirected connected graph \(G = (V,E)\) and a spanning tree
\(T\) of \(G\).

Question: Find the maximum size of a set \(F \subseteq E \setminus E(T)\) such
that \(H = (V, E(T) \cup F)\) is a spanning cactus subgraph of \(G\).
\end{minipage}}

\vspace{2.5mm}

\noindent\fbox{\begin{minipage}{\textwidth}
{\sc Edge Deletion To Cactus}\\
Input: A simple, undirected connected graph \(G = (V,E)\).

Question: Find the size of a minimum sized set \(F \subseteq E\) such that
\(H = (V, E \setminus F)\) is a connected cactus subgraph of \(G\).
\end{minipage}}

\vspace{2.5mm}

The second problem, {\sc Edge Deletion to Cactus}, belongs to the broad class of graph modification
problems, in which the goal is to transform a given graph into a graph belonging to a specified
class using a minimum number of edits.
Such problems have been extensively studied for graph classes including forests, interval graphs,
chordal graphs, and planar graphs.
A closely related problem is {\sc Vertex Deletion to Cactus}, which asks for a minimum-sized set of
vertices whose removal transforms the input graph into a cactus.
Several results in this direction are
known \cite{TSUR2023106317,10.1007/978-3-662-53536-3_20,10.1007/978-3-642-34611-8_19,Aoike2022}.

The computational complexity of {\sc Edge Deletion to Cactus} was first investigated by
El-Mallah and Coulborn \cite{34972242c5ef4f499fd7580084406c9c}, who proved that the problem is
NP-complete for general graphs.
More recently, Koch et al. \cite{KOCH2024122} strengthened this result by showing that the problem
remains NP-complete even when restricted to bipartite graphs.
In the same work, the authors explicitly posed the complexity of the first problem,
{\sc Spanning Tree to Cactus}, as an open question.

To the best of our knowledge, no polynomial-time algorithm was previously known for
{\sc Spanning Tree to Cactus}, and no exact exponential-time algorithm was known for
{\sc Edge Deletion to Cactus} with a running time substantially better than brute-force enumeration
over all subsets of edges.

\medskip

\paragraph{Our contributions.}
We resolve both problems algorithmically.

First, we design a polynomial-time algorithm for {\sc Spanning Tree to Cactus}.

\begin{restatable}[]{theorem}{treetocactus}
\label{thm:treetocactus}
There exists a polynomial-time algorithm that solves {\sc Spanning Tree to Cactus}.
\end{restatable}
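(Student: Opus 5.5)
The plan is to reduce {\sc Spanning Tree to Cactus} to a maximum matching problem. Fix the spanning tree \(T\). Every non-tree edge \(e = uv \in E \setminus E(T)\) determines a unique fundamental cycle \(C_e\), made of \(e\) and the tree path \(P_e\) between \(u\) and \(v\) in \(T\).

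\textbf{Step 1: characterize the feasible sets.} We claim that for \(F \subseteq E \setminus E(T)\), the graph \(H = (V, E(T)\cup F)\) is a cactus if and only if the tree paths \(P_e\), \(e \in F\), are pairwise edge-disjoint.

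\emph{Sufficiency.} If the paths are edge-disjoint, then each tree edge lies in at most one fundamental cycle \(C_e\) with \(e\in F\). We argue that the cycles \(C_e\) are then exactly the non-trivial blocks of \(H\). One way is to add the edges of \(F\) one at a time. When \(e\) is added, \(P_e\) uses only edges not yet on any cycle, so those edges are bridges of the current graph. Adding \(e\) therefore creates exactly one new cycle. That cycle consists of previously bridge edges plus \(e\), so it shares at most one vertex with each earlier cycle, and the cactus property is preserved.

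\emph{Necessity.} Suppose \(P_e\) and \(P_f\) share a tree edge \(t\), with \(e,f\in F\). Then \(C_e\) and \(C_f\) are two distinct cycles of \(H\) that both contain \(t\). This contradicts the requirement that every edge lies in at most one cycle. The graph \(H\) is connected because it contains \(T\).

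\textbf{Step 2: solve the resulting packing problem.} The task is now to choose a maximum number of pairwise edge-disjoint paths from the family \(\{P_e\}\) in the tree \(T\). This is maximum edge-disjoint path packing with a fixed path per demand, which is NP-hard on general trees. Polynomial-time solvability therefore has to come from the structure of \(G\): the graph is simple, and every \(P_e\) has length at least two. We would resolve this by a bottom-up dynamic program on \(T\) rooted at an arbitrary vertex, in the style of the Garg--Vazirani--Yannakakis algorithm for integral multicommodity flow on trees with unit capacities.

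At each vertex \(v\), consider the paths that pass through \(v\) or end at \(v\). Each one uses one or two of the edges at \(v\): the parent edge and the edges to its children. The children's subproblems report which child edges can be left ``open'' upward without loss of optimality. Choosing compatible paths through \(v\) then becomes a maximum matching problem in an auxiliary graph on the child edges of \(v\). In that graph, a path using two child edges is an edge between them, and a path using one child edge together with the parent edge is marked for the upward decision. Each of these auxiliary matching problems is solved with Edmonds' algorithm. Summing over all vertices, the total running time is polynomial.

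\textbf{Main obstacle.} The hard part is the exchange argument inside Step 2. We must show that a greedy, matching-based choice at each vertex, which keeps the parent edge free whenever that costs nothing, loses no optimality. We must also show that the upward choice is well defined when several child edges could each remain open. The characterization in Step 1 is routine.

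If the general Garg--Vazirani--Yannakakis argument fails for our instances, the first fallback is to use the fact that all demands are unit and all capacities are one. With unit demands the local problem at each vertex is a matching problem with a ``one spare'' choice, and the standard case analysis there (which of several maximum matchings to keep, and when a spare open edge is needed) can be carried out directly.
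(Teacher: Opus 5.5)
\paragraph{Verdict.} Your Step~1 is correct. Step~2, however, has a genuine gap, and it is driven by a false premise.

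\paragraph{Step~1.} Your argument is a more direct proof of the characterization than the paper's. The paper proves \cref{th:treeplusedges} by contradiction, through edge-minimal non-cactus graphs and \cref{thm:cycleedges}. Your observation does the same work in a few lines: a tree edge is a bridge of $T\cup F'$ exactly when it lies on no $P_f$ with $f\in F'$, so each newly added non-tree edge closes a unique cycle made of former bridges.

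\paragraph{The gap in Step~2.} Step~2 carries all the algorithmic content of the theorem, and you never establish that the packing problem is polynomial. You leave the correctness of your DP as the unresolved ``main obstacle'' and offer fallbacks instead. Moreover, the premise steering you is false. Choosing a maximum number of pairwise edge-disjoint paths from a given family in an \emph{undirected} tree is not NP-hard. It is integral multicommodity flow in a tree with unit capacities, and Garg, Vazirani and Yannakakis solve that in polynomial time for every tree. Hardness appears only with non-unit capacities or in bidirected trees. So the structure of $G$ (simplicity, $|E(P_e)|\ge 2$) plays no role, and searching for it is a wrong turn.

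The paper closes this step with a black box. The conflict graph of the paths $P_e$ is an EPT graph, and maximum independent set in EPT graphs is polynomial by Tarjan's clique-separator decomposition. Splitting along the cliques of paths through a common tree edge leaves pieces whose paths all pass through one tree vertex. Those pieces are line graphs of multigraphs on a star, so this is essentially the same vertex-by-vertex matching mechanism as your DP.

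\paragraph{How to complete your route.} The simplest fix is to cite the unit-capacity result of Garg, Vazirani and Yannakakis. If you want a self-contained DP, pass up \emph{paths}, not open child edges. Whether a demand path can use $(v,c)$ depends on its whole portion inside the subtree $T_c$, so an edge-level flag is not enough. Define the following.
\begin{itemize}
\item $\mathrm{opt}(v)$ is the maximum number of pairwise edge-disjoint demand paths contained in $T_v$.
\item $S_v$ is the set of demand paths using the parent edge of $v$ that are edge-disjoint from some packing of size $\mathrm{opt}(v)$ inside $T_v$.
\item $H_v$ has the children of $v$ as vertices. It has an edge $cw$ for every demand path with top vertex $v$ through children $c$ and $w$ that lies in $S_c\cap S_w$. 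It has a pendant edge at $c$ for every demand path with top vertex $v$, one endpoint $v$, entering $c$, and lying in $S_c$.
\end{itemize}
With these definitions, the following hold.
\begin{itemize}
\item $\mathrm{opt}(v)=\sum_c\mathrm{opt}(c)+\nu(H_v)$.
\item Every demand path that uses the parent edge of $v$ and ends at $v$ lies in $S_v$.
\item A path entering child $c$ lies in $S_v$ if and only if it lies in $S_c$ and $\nu(H_v-c)=\nu(H_v)$.
\end{itemize}
The only exchange argument needed is this: a chosen path with top vertex $v$ that is not in $S_c$ costs at least one inside $T_c$, while contributing at most one to the matching at $v$.

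Your worry about several children that can each, but not simultaneously, be left open also disappears. Only one path can ever use the parent edge, so each candidate in $S_v$ needs only its own optimal completion below. Overall this takes polynomially many calls to Edmonds' algorithm.
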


This algorithm is based on a reduction to the Maximum Independent Set problem on Edge Path Tree (EPT)
graphs \cite{GOLUMBIC19858,GOLUMBIC1985151,TARJAN1985221}, obtained by constructing
an auxiliary intersection graph of
the unique paths between the endpoints of the non-tree edges.
By combining this polynomial-time routine with enumeration over all spanning trees of \(G\), we
obtain our first exact algorithm for {\sc Edge Deletion to Cactus}.
(In the following theorem the notation \(\mathcal{O}^{\star}(.)\) hides
the polynomial factors on \(n\).)

\begin{restatable}[]{theorem}{ndeletiontocactus}
\label{thm:ndeletiontocactus}
There exists an algorithm that solves {\sc Edge Deletion to Cactus} in time
\(\mathcal{O}^{\star}(n^{\,n-2})\).
\end{restatable}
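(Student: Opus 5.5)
The plan is to combine \cref{thm:treetocactus} with an enumeration of all spanning trees of \(G\), using Cayley's formula to bound the number of trees. The structural fact underneath is that every connected spanning cactus subgraph \(H\) of \(G\) contains a spanning tree \(T\) of \(G\). This holds simply because \(H\) is connected and spans \(V\). Conversely, for any spanning tree \(T\) and any \(F \subseteq E\setminus E(T)\) such that \((V, E(T)\cup F)\) is a cactus, the graph \((V,E(T)\cup F)\) is a connected spanning cactus subgraph of \(G\). It has \(n-1+|F|\) edges, so deleting the remaining \(|E|-(n-1)-|F|\) edges is a feasible solution to {\sc Edge Deletion to Cactus}.

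Writing \(\mathrm{opt}(T)\) for the value returned by the {\sc Spanning Tree to Cactus} algorithm on \((G,T)\), the answer to {\sc Edge Deletion to Cactus} is therefore
\[
  |E| - (n-1) - \max_{T} \mathrm{opt}(T),
\]
where the maximum ranges over all spanning trees \(T\) of \(G\). Both inequalities are needed:
\begin{itemize}
\item Feasibility of every term follows from the converse direction above.
\item Optimality: take an optimal cactus \(H^\ast\) and any spanning tree \(T^\ast \subseteq H^\ast\). Then \(F^\ast = E(H^\ast)\setminus E(T^\ast)\) is a feasible choice for \((G,T^\ast)\), so \(\mathrm{opt}(T^\ast) \ge |E(H^\ast)|-(n-1)\).
\end{itemize}

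For the algorithm, I would enumerate all spanning trees of \(G\) with a polynomial-delay enumeration scheme, for instance Read--Tarjan style backtracking, or simply branching on edges with a connectivity check so that every leaf of the search is a spanning tree. Each tree is passed to the polynomial-time algorithm of \cref{thm:treetocactus}, and we keep the maximum. The spanning trees of \(G\) are a subset of the labelled trees on \(n\) vertices, so by Cayley's formula there are at most \(n^{n-2}\) of them. The total running time is therefore \(n^{n-2}\cdot \mathrm{poly}(n)\), that is \(\mathcal{O}^{\star}(n^{\,n-2})\).

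The step needing the most care is making sure the enumeration itself costs only polynomial time per tree rather than scanning all \(2^{|E|}\) edge subsets. I would use the standard binary branching: pick an edge \(e\), and recurse on the two branches \(e\) contracted (included) and \(e\) deleted (excluded). A branch is pruned whenever the deletion disconnects the graph, so every leaf is a distinct spanning tree and the recursion depth is at most \(|E|\). This gives polynomial delay, which suffices for the \(\mathcal{O}^{\star}\) bound. Correctness otherwise rests entirely on \cref{thm:treetocactus} and the two-sided argument above.
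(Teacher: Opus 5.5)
Your proposal is correct and is essentially the paper's proof. Both enumerate all spanning trees of \(G\), run the algorithm of \cref{thm:treetocactus} on each, take the maximum, and bound \(\tau(G)\le n^{n-2}\) by Cayley's formula; the paper cites Winter's \(\mathcal{O}^{\star}(\tau(G))\) enumeration where you use Read--Tarjan or contraction--deletion branching, and your explicit two-sided argument (every returned value is feasible, and the optimum is attained at a spanning tree of an optimal cactus) is slightly more complete than the paper's, which spells out only the second direction. One detail to add to your branching scheme is that contractions create parallel edges and hence loops, which must be discarded rather than branched on, so that every leaf is really a spanning tree.
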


Although this algorithm is less efficient than the one described later, it still improves
substantially over the brute-force method of enumerating all subsets of the edge set, which would
require \(\mathcal{O}^{\star}(2^{n^2})\) time in the worst case.

Next, we solve the {\sc Edge Deletion to Cactus} problem independently in order to obtain a more
efficient exact algorithm.

\begin{restatable}[]{theorem}{deletiontocactus}
\label{thm:deletiontocactus}
There exists an algorithm that solves {\sc Edge Deletion to Cactus} in time
\(\mathcal{O}^{\star}(3^n)\).
\end{restatable}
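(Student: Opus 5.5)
We plan a subset dynamic program over vertex sets, using the block structure of cacti. Minimising the number of deleted edges is the same as maximising the number of edges of a spanning cactus subgraph \(H\) of \(G\). If \(H\) has \(n\) vertices and \(c\) cycle blocks, it has \(n-1+c\) edges. So it suffices to compute a spanning cactus with the maximum number of cycle blocks; the answer is then \(|E| - (n-1+c)\). We root cacti at a vertex and decompose them by their blocks, in the style of Bellman--Held--Karp.

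We define two tables over pairs \((S,v)\) with \(v\in S\subseteq V\). The first, \(A[S,v]\), is the maximum number of cycles in a cactus subgraph of \(G[S]\) that spans \(S\) and is rooted at \(v\). The second, \(B[S,v]\), is the same maximum restricted to cacti in which \(v\) lies in exactly one block. A rooted cactus is the union of the subcacti hanging at \(v\), one per block containing \(v\), and these pairwise share only \(v\). This gives the recurrence \(A[S,v]=\max\bigl(B[S,v],\ \max_{v\in S'\subsetneq S,\ |S'|\ge 2}(B[S',v]+A[(S\setminus S')\cup\{v\},v])\bigr)\), with \(A[\{v\},v]=0\). To enumerate each split only once, we fix the canonical block of \(v\) to be the one containing the smallest vertex of \(S\setminus\{v\}\). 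Each vertex is then either in \(S'\setminus\{v\}\), in \(S\setminus S'\), or outside \(S\). Summed over all \(S\), this gives \(3^n\) choices per fixed \(v\), hence \(\mathcal{O}^\star(3^n)\) time.

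The main obstacle is computing \(B[S,v]\) within the same budget. The single block at \(v\) is either a bridge \(vu\) or a cycle \(v u_1\dots u_k v\). In the bridge case we need \(B[S,v]=1\cdot 0+A[S\setminus\{v\},u]\) for some \(u\) adjacent to \(v\). In the cycle case, the vertices \(u_i\) carry disjoint subcacti whose union is \(S\setminus\{v\}\). We handle this with an auxiliary table \(P[S,v,w]\), the best value of a structure consisting of a path \(v=u_0,u_1,\dots,u_k=w\) in \(G\) together with disjoint rooted cacti hanging at the \(u_i\) (\(i\ge1\)), covering \(S\). Its recurrence peels off the last vertex \(w\) together with its hanging set \(T\ni w\): \(P[S,v,w]=\max_{T,\,u}\bigl(P[S\setminus T,v,u]+A[T,w]\bigr)\) over \(u\) adjacent to \(w\). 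We then close the cycle with \(B[S,v]=\max_{w\in N(v)}(P[S,v,w]+1)\), requiring \(k\ge2\) so that the cycle is a genuine cycle in a simple graph.

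Each of these recurrences enumerates a set together with a subset of it, so the cost is again \(3^n\) times a polynomial (the extra factors \(n\) come from \(v,w,u\)). The cactus property holds because blocks meet only at cut vertices, by construction. Conversely, every cactus decomposes uniquely in this way, which proves correctness. The final answer is \(|E|-(n-1+A[V,r])\) for any root \(r\), or infinity if no spanning cactus exists, which happens exactly when \(G\) is disconnected.
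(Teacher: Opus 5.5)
Your proposal is correct, but it takes a genuinely different route from the paper.

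The paper keeps a single table \(I(X)\) over connected vertex sets and always splits an optimal cactus at a cut vertex, \(I(X)=\max_{x,A}\bigl(I(A\cup\{x\})+I(B\cup\{x\})\bigr)\) (\cref{thm:thm1,thm:thm2}). The only optimal cacti that cannot be split this way are Hamiltonian cycles of \(G[X]\). The paper removes these with the exchange argument of \cref{lm:lm1}: if \(G[X]\) is not a cactus and \(|X|\ge 5\), trading a cycle edge for a chord gives an equally large spanning cactus with a cut vertex. So cycles are never built explicitly; they arise only in the base cases (\(|X|\le 5\) by brute force, or \(G[X]\) itself a cactus).

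You instead decompose a rooted cactus along its blocks and assemble each cycle block explicitly through the path table \(P[S,v,w]\). This needs no exchange lemma, only the fact that each component of \(H-E(\beta)\) meets the block \(\beta\) in exactly one vertex. The cost is three tables and an extra polynomial factor; the same set/subset counting still gives \(3^n\).

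What your version buys is robustness. The paper's exchange step depends on all edges counting equally. Your recurrences carry over directly to edge weights: add \(w(vu)\) in the bridge case, \(w(uw)\) in each path step, and \(w(wv)\) when closing the cycle. The paper only mentions that weighted variant as a future direction.

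When you write it up, fill in three small points:
\begin{itemize}
\item State the base case \(P[\{v\},v,v]=0\), with \(T\subseteq S\setminus\{v\}\).
\item Make the requirement \(k\ge 2\) explicit, e.g.\ via a Boolean index in \(P\), or by closing only after a last step with \(u\neq v\).
\item Drop the word ``uniquely'': a cycle can be traversed in either direction. For a maximisation, existence of a decomposition is all you need.
\end{itemize}
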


\medskip

\paragraph{Organisation of the paper.}
We first establish structural properties of cactus graphs and edge-minimal non-cactus graphs.
Using these properties together with results on EPT graphs, we prove
\cref{thm:treetocactus}.
We then use this algorithm as a subroutine to obtain
\cref{thm:ndeletiontocactus}.

Finally, we develop additional structural observations on cactus graphs and use them to design a
dynamic programming algorithm that proves \cref{thm:deletiontocactus}.

%% file: preliminaries.tex
We use standard notations from graph theory.
All graphs considered in this paper are simple and undirected.

Let \(G = (V,E)\) be a graph.
Throughout the paper, we write \(n := |V|\) to denote the number of
vertices of the input graph. The \emph{size} of a graph is defined as \(|V| + |E|\).

For any graph \(H\), we denote its vertex set by \(V(H)\) and its edge
set by \(E(H)\).
For two graphs \(G_1\) and \(G_2\), their union, denoted by
\(G_1 \cup G_2\), is the graph with vertex set
\(V(G_1) \cup V(G_2)\) and edge set \(E(G_1) \cup E(G_2)\).

For a graph \(G\) and a vertex \(x \in V(G)\), we denote by
\(\deg_G(x)\) the degree of \(x\) in \(G\).
For any subset \(X \subseteq V(G)\), we denote by \(G[X]\) the subgraph
of \(G\) induced by \(X\).For any subset \(X \subseteq V(G)\), we denote by \(G - X\) the induced
subgraph \(G[V(G) \setminus X]\).
If \(X = \{v\}\) for some \(v \in V(G)\), we simply write \(G - v\).
For any subset \(F \subseteq E(G)\), we denote by \(G - F\) the graph
\((V(G), E(G) \setminus F)\).
If \(F = \{e\}\) for some \(e \in E(G)\), we simply write \(G - e\).

A \emph{block} of a graph is a maximal biconnected subgraph.

We now define Edge Path Tree (EPT) graphs following Golumbic and Jamison
\cite{GOLUMBIC19858,GOLUMBIC1985151}. Let \(T\) be a tree, and let \(\mathcal{P}\) be a collection of nontrivial
simple paths in \(T\); that is, each path in \(\mathcal{P}\) contains at
least one edge. We define the \emph{edge intersection graph} of \(\mathcal{P}\) in \(T\),
denoted by \(\Gamma(\mathcal{P}, T)\), as the graph whose vertex set
corresponds to the elements of \(\mathcal{P}\), with two vertices joined
by an edge if and only if their corresponding paths share at least one
edge in \(T\). A graph \(G\) is called an \emph{EPT graph} if there exist a tree \(T\)
and a collection of paths \(\mathcal{P}\) in \(T\) such that
\(G = \Gamma(\mathcal{P}, T)\).

%% file: tree-to-cactus.tex

\label{sec:treetocactus}

In this section, we first prove \cref{thm:treetocactus} by presenting a
polynomial-time algorithm for the problem {\sc Spanning Tree to Cactus}.
We then use this result to establish \cref{thm:ndeletiontocactus}, which
yields an algorithm for {\sc Edge Deletion to Cactus} with running time
\(\mathcal{O}^{\star}(n^{\,n-2})\).

We require the following theorem for the design of our algorithm.
Its proof is deferred to the appendix; see \cref{sec:edgeminimal}.
A pictorial illustration of the statement is given in
\cref{fig:edge-minimal-noncactus}.

\begin{restatable}[]{theorem}{cycleedges}
  \label{thm:cycleedges}
  Let \(G = (V,E)\) be an edge-minimal non-cactus connected graph; that is,
for every \(e \in E\), the graph \(G - e\) is a cactus.
Then, for any two edges \(e_1 = a_1 b_1\) and \(e_2 = a_2 b_2\), there
exists a cycle \(C\) in \(G\) that contains both \(e_1\) and \(e_2\).
\end{restatable}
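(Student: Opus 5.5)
The plan is to reduce the statement to a classical fact: in a \(2\)-connected graph, any two edges lie on a common cycle. So the main work is to show that an edge-minimal non-cactus connected graph \(G\) is \(2\)-connected.

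\textbf{Step 1: \(G\) has no bridges.} Recall that a cactus is by definition connected. If some edge \(e\) were a bridge of \(G\), then \(G-e\) would be disconnected, hence not a cactus, contradicting minimality. So every edge of \(G\) lies on a cycle. In particular, every block of \(G\) is \(2\)-connected with at least three vertices; no block is a single edge. Moreover \(G\) has at least two edges, since a graph with at most one edge is a cactus.

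\textbf{Step 2: \(G\) has exactly one block.} Since \(G\) is not a cactus, some block \(B\) of \(G\) is neither a single edge nor a cycle. A \(2\)-connected graph that is not a cycle contains two cycles sharing at least one edge. One can see this via an ear decomposition: take a cycle and add the first ear, which yields a theta subgraph. Now suppose \(G\) had a second block \(B'\neq B\), and pick an edge \(e\in E(B')\).
\begin{itemize}
\item By Step 1, \(e\) is not a bridge, so \(G-e\) is connected.
\item \(B\) is untouched by deleting \(e\), so the theta subgraph survives in \(G-e\).
\item A theta subgraph is two cycles sharing an edge, so \(G-e\) is not a cactus.
\end{itemize}
This contradicts edge-minimality. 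Hence \(G=B\) is \(2\)-connected.

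\textbf{Step 3: any two edges lie on a common cycle.} If \(e_1=e_2\), the claim follows from Step 1. Otherwise, subdivide \(e_1\) and \(e_2\) by new vertices \(s\) and \(t\). Subdividing an edge preserves \(2\)-connectivity, so the resulting graph \(G'\) is \(2\)-connected. By Menger's theorem, \(G'\) contains two internally vertex-disjoint \(s\)--\(t\) paths. Their union is a cycle through \(s\) and \(t\). Since \(s\) and \(t\) have degree \(2\) in \(G'\), this cycle must use both edges at \(s\) and both edges at \(t\). Undoing the subdivisions therefore gives a cycle \(C\) in \(G\) containing both \(e_1\) and \(e_2\).

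The main obstacle is the bookkeeping in Step 2. One must use the connectivity requirement in the definition of a cactus to rule out bridges, and argue carefully that the bad block \(B\) still witnesses a non-cactus after deleting an edge from another block. Steps 1 and 3 are then routine.
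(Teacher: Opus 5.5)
Your argument is correct, but it takes a different route from the paper.

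\textbf{Your route.} You first extract global structure: no bridges, a single block, hence $G$ is $2$-connected. You then apply the classical fact, via subdivision and Menger's theorem, that any two edges of a $2$-connected graph lie on a common cycle.

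\textbf{The paper's route.} The paper never mentions blocks or connectivity. It fixes two cycles $C_1,C_2$ sharing an edge and works only with these. By edge-minimality it shows that:
\begin{itemize}
\item $E\subseteq E(C_1)\cup E(C_2)$;
\item every cycle shares an edge with $C_1$ (and with $C_2$);
\item a third cycle $C_3$ exists, inside the two arcs left after removing a maximal common subpath of $C_1$ and $C_2$;
\item any such $C_3$ must contain $E(C_1)\,\Delta\,E(C_2)$.
\end{itemize}
Two edges then either lie on a common $C_i$, or one lies in $E(C_1)\setminus E(C_2)$ and the other in $E(C_2)\setminus E(C_1)$, in which case $C_3$ contains both.

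\textbf{Comparison.} The paper's proof is elementary and self-contained. Yours relies on the block characterisation of cacti, ear decompositions and Menger's theorem, but gives a more conceptual picture of why the statement holds.

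\textbf{A possible shortcut.} The deletion argument of your Step 2 works for every edge outside the theta subgraph $\Theta\subseteq B$, not only for edges outside $B$. So minimality already forces $G=\Theta$. Any two edges of a theta graph lie in the union of two of its three internally disjoint branch paths, and that union is a cycle. Step 3 therefore needs no Menger. Step 1, with its appeal to the connectivity clause of the cactus definition, also becomes unnecessary: deleting an edge outside $\Theta$ leaves $\Theta$ intact whether or not that edge is a bridge. This shortcut is essentially the paper's observation that $E\subseteq E(C_1)\cup E(C_2)$.
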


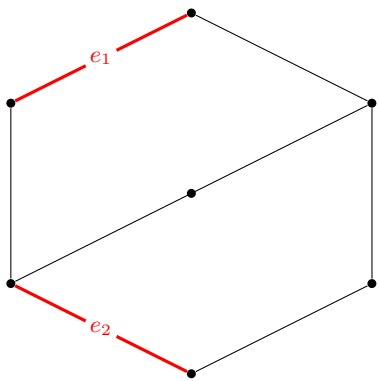
\begin{figure}[h]
\centering
\begin{tikzpicture}[
  scale=1.2,
  every node/.style={circle, fill=black, inner sep=1.2pt}
]

\node (a) at (0,2) {};
\node (b) at (2,3) {};
\node (c) at (4,2) {};
\node (d) at (4,0) {};
\node (e) at (2,-1) {};
\node (f) at (0,0) {};
\node (g) at (2,1) {};

\draw (a)--(b)--(c)--(d)--(e)--(f)--(a);

\draw (f)--(c);

\draw[very thick, red] (a)-- node[fill=white]{\(e_1\)}(b); 
\draw[very thick, red] (e)-- node[fill=white]{\(e_2\)}(f); 

\end{tikzpicture}
\caption{\small An edge-minimal non-cactus graph with two marked edges \(e_1\) and \(e_2\), such
that there exists a cycle which contains both \(e_1\) and \(e_2\), as claimed in \cref{thm:cycleedges}}
\label{fig:edge-minimal-noncactus}
\end{figure}

The following theorem establishes a key structural property that is
central to the design of our algorithm.

\begin{theorem}
  \label{th:treeplusedges}
  Let \(T = (V,E)\) be a tree, and let \(\hat{E}\) be a set of edges with
  endpoints in \(V\) such that \(\hat{E} \cap E = \emptyset\).
  Define a new graph \(H := (V, E \cup \hat{E})\).
  Then \(H\) is not a cactus if and only if there exist two edges in
  \(\hat{E}\) such that the unique paths between their respective
  endpoints in \(T\) share at least one edge.
\end{theorem}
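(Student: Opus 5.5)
We prove the two directions separately. For \(f \in \hat{E}\) write \(P_f\) for the unique path in \(T\) between the endpoints of \(f\), and \(C_f := P_f + f\) for its fundamental cycle. Note that \(H\) is connected because it contains the spanning tree \(T\).

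\emph{Easy direction.} Suppose \(f_1, f_2 \in \hat{E}\) are distinct and \(P_{f_1}\) and \(P_{f_2}\) share a tree edge \(e\). Then \(C_{f_1}\) and \(C_{f_2}\) are two distinct cycles of \(H\); they are distinct because \(f_1 \in C_{f_1}\) but \(f_1 \notin C_{f_2}\). Both cycles contain \(e\), so \(e\) lies on two different cycles and \(H\) is not a cactus.

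\emph{Hard direction.} Assume \(H\) is not a cactus. We will pass to an edge-minimal non-cactus connected subgraph and apply \cref{thm:cycleedges}.

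First, delete edges of \(H\) one at a time while the graph remains a connected non-cactus. Call the result \(H'\).

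Second, show that every remaining single-edge deletion from \(H'\) yields a cactus. If deleting some \(e\) disconnects \(H'\), then \(e\) is a bridge. A bridge lies on no cycle, so deleting it does not remove any violation of the cactus property. Moreover, a disconnected graph whose components are all cacti would force \(H'\) itself to be a cactus. So one component of \(H'-e\) is a connected non-cactus, and we can restrict to that component and continue minimising. Hence we may take \(H'\) to be connected and bridgeless, with \(H'-e\) a cactus for every edge \(e\). This is exactly the hypothesis of \cref{thm:cycleedges}.

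Third, \(H'\) must contain at least two edges of \(\hat{E}\). Otherwise \(H'\) is a subgraph of \(T\) plus at most one extra edge, and such a graph has at most one cycle, so it is a cactus. Pick distinct \(f_1, f_2 \in \hat{E} \cap E(H')\). By \cref{thm:cycleedges}, some cycle \(C\) of \(H'\) contains both \(f_1\) and \(f_2\).

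Fourth, derive the shared tree edge. Work in the cycle space over \(\mathbb{F}_2\) with respect to the spanning tree \(T\) of \(H\). The edge set of \(C\) is the symmetric difference \(\bigoplus_{f \in \hat{E}\cap E(C)} C_f\). To extract the shared edge, we would argue by contradiction. Suppose the paths \(P_f\) for \(f \in \hat{E}\) are pairwise edge-disjoint. Then every tree edge lies on exactly one fundamental cycle or on none. This property alone makes \(H\) a cactus, which contradicts our assumption.

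To see why, suppose some edge \(g\) lies on two distinct cycles \(D_1\) and \(D_2\). Write each as a sum of fundamental cycles, \(D_i = \bigoplus_{f \in S_i} C_f\). Because the \(P_f\) are edge-disjoint, a tree edge in \(P_f\) belongs to \(D_i\) exactly when \(f \in S_i\). So \(D_i\) is precisely the edge-disjoint union of the \(C_f\) with \(f \in S_i\). A simple cycle cannot be a disjoint union of two or more cycles, hence \(|S_i| = 1\). The edge \(g\) then forces \(S_1 = S_2\), since \(g\) is either the unique \(f\) or lies in a unique \(P_f\). Therefore \(D_1 = D_2\), a contradiction.

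This contradiction argument settles the hard direction on its own. \cref{thm:cycleedges} then only serves as intuition; if used, it would show that \(C\) reuses tree edges. The delicate point I expect is the claim that a simple cycle that is an edge-disjoint union of cycles must be a single one of them. This holds because a simple cycle contains no proper nonempty subset of its edges that forms an even-degree subgraph.
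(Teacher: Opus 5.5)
Your proof is correct, and the easy direction matches the paper's. For the hard direction, the real work is done by your fourth step, which takes a genuinely different route from the paper.

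\textbf{The paper's route.} It fixes a minimal $E^{\star}\subseteq\hat{E}$ with $T+E^{\star}$ non-cactus and passes to an edge-minimal non-cactus subgraph. It uses \cref{thm:cycleedges} to get a cycle $\tilde{C}$ through two non-tree edges $e_1,e_2$. It then applies \cref{lm:cyclicgraph2} to $(\tilde{C}-e_1)\cup P_1$ to find a cycle through $e_2$ that avoids $e_1$ and differs from $P_2+e_2$. This contradicts the minimality of $E^{\star}$.

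\textbf{Your route.} You work in the cycle space over $\mathbb{F}_2$. Any cycle $D$ equals $\bigoplus_{f\in S}C_f$ with $S=\hat{E}\cap E(D)$, since the difference is an even subgraph of the forest $T$ and hence empty. Edge-disjointness of the $P_f$ turns this sum into a disjoint union. No proper nonempty edge subset of a cycle is even, so $|S|=1$; also $S\neq\emptyset$ because $T$ is acyclic. Hence every cycle of $H$ is fundamental, and the uniqueness of the cycle through each edge follows immediately.

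\textbf{What each buys.} Your argument is shorter and self-contained. It also gives the slightly stronger fact that, under edge-disjointness, the cycles of $H$ are exactly the fundamental cycles. It avoids the appendix machinery on edge-minimal non-cacti, and with it the side conditions that machinery leaves implicit:
\begin{itemize}
\item that a suitable edge-minimal non-cactus subgraph exists and is bridgeless (you noticed this in step two);
\item that it contains all of $E^{\star}$.
\end{itemize}
The paper's route, in exchange, stays within the path-and-cycle combinatorics it develops for \cref{thm:cycleedges}.

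Your step four uses none of steps one to three. You should drop them and present the easy direction plus the cycle-space argument as the whole proof.
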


\begin{proof}
Suppose first that there exist two edges in \(\hat{E}\), say \(e'_1\) and
\(e'_2\), such that the unique paths between their respective endpoints
in \(T\), denoted by \(P'_1\) and \(P'_2\), share at least one edge.
Then \(H\) contains two distinct cycles, namely \(P'_1 \cup e'_1\) and
\(P'_2 \cup e'_2\), that share an edge, and hence \(H\) is not a
cactus.

Conversely, suppose that \(H\) is not a cactus.
Let \(E^{\star} \subseteq \hat{E}\) be a minimal set such that the graph
\[
H^{\star} := (V, E \cup E^{\star})
\]
is not a cactus; that is, for every \(e \in E^{\star}\), the graph
\(H^{\star} - e\) is a cactus.
Clearly, \(H^{\star}\) is a subgraph of \(H\), and we must have
\(|E^{\star}| \geq 2\).

Assume, for a contradiction, that for every pair of edges in
\(\hat{E}\), and hence also in \(E^{\star}\), the unique paths between
their respective endpoints in \(T\) are edge-disjoint.
Let \(H'\) be an edge-minimal non-cactus subgraph of \(H^{\star}\); that is, for
every \(e \in E(H')\), the graph \(H' - e\) is a cactus.
Then \(H'\) must contain all edges of \(E^{\star}\), and therefore it
contains at least two edges from \(\hat{E}\); denote them by \(e_1\) and
\(e_2\).

Let \(P_1\) be the unique path in \(T\) between the endpoints of \(e_1\),
say \(a_1\) and \(b_1\), and let \(P_2\) be the unique path in \(T\)
between the endpoints of \(e_2\), say \(a_2\) and \(b_2\).
By \cref{thm:cycleedges}, there exists a cycle,
denoted by \(\tilde{C}\), in \(H'\) that contains both \(e_1\) and
\(e_2\).
Define
\[
\tilde{P} := \tilde{C} - e_1 .
\]
Then \(\tilde{P}\) is a path between \(a_1\) and \(b_1\) and contains the
edge \(e_2\).

Note that \(\tilde{P}\) is distinct from \(P_1\), since \(\tilde{P}\)
contains the edge \(e_2\), whereas \(P_1\) does not.
Consider the graph
\[
\hat{C} := \tilde{P} \cup P_1 .
\]
This graph contains the edge \(e_2\) but not \(e_1\).
Moreover, it is connected and every edge belongs either to
\(\tilde{P}\) or to \(P_1\), which are two distinct paths between
\(a_1\) and \(b_1\).
By \cref{lm:cyclicgraph2}, every edge in
\(E(\tilde{P}) \Delta E(P_1)\) lies in a cycle.

In particular, the edge \(e_2\) is contained in a cycle of
\(\tilde{C}\); denote this cycle by \(\mathcal{C}^{\star}\).
This cycle is distinct from \(P_2 \cup \{e_2\}\), since it contains
edges from \(P_1\), whereas \(P_2 \cup \{e_2\}\) does not, as
\(E(P_1) \cap E(P_2) = \emptyset\) by assumption.

Observe that the cycle \(\mathcal{C}^{\star}\) is present in the graph
\[
H'' := (V, (E \cup E^{\star}) \setminus \{e_1\}).
\]
Thus, the graph \(H''\) contains two distinct cycles,
\(\mathcal{C}^{\star}\) and \(P_2 \cup \{e_2\}\), that share the edge
\(e_2\).
This contradicts the minimality of \(E^{\star}\), as \(H '' = H^{\star} - e_1\).
Therefore, there must exist two distinct edges in \(E^{\star}\) and hence in \(\hat{E}\), say
\(e_i\) and \(e_j\), such that the unique paths between their endpoints
share at least one edge.

\end{proof}

\subsection{The Algorithm}

\input{treetocactusalgo}

Observe that the above algorithm can be used to design another algorithm
for {\sc Edge Deletion To Cactus}, although not as efficient as the one
from \cref{thm:deletiontocactus}. Using any existing algorithm to enumerate
all possible spanning tree of the graph \(G\) (input to {\sc  Edge Deletion To Cactus}),
we call the above algorithm for each spanning tree of \(G\) and return the overall
maximum value reported. Winter \cite{Winter1986}, provides an
\(\mathcal{O}^{\star}(\tau(G))\) time algorithm for enumerating all
spanning tree of a graph \(G\), where \(\tau(G)\) is number of spanning trees
of a graph \(G\). By the well-known Cayley's formula, \(\tau(G)\) can be
at most \(n^{n - 2}\). Thus, we obtain an \(\mathcal{O}^{\star}(n^{\,n-2})\)-time algorithm for
{\sc Edge Deletion to Cactus}, which improves upon the brute-force
approach of enumerating all subsets of the edge set of \(G\), an
approach that would require \(\mathcal{O}^{\star}(2^{n^{2}})\) time.

Thus, we have established the following theorem.

\ndeletiontocactus*

\begin{proof}
Let \(F \subseteq E\) be a minimum-size set of edges whose deletion transforms
\(G=(V,E)\) into a cactus, and let
\[
H := (V, E \setminus F).
\]
Then \(H\) is a connected cactus subgraph of \(G\).
In particular, \(H\) is a maximum-sized spanning cactus subgraph of \(G\).

Since \(H\) is connected, it contains a spanning tree \(T'\).
The tree \(T'\) is also a spanning tree of \(G\).

Starting from \(T'\), the graph \(H\) can be obtained by adding to \(T'\)
a set of non-tree edges such that the resulting graph remains a cactus.
By \cref{thm:treetocactus}, \cref{alg:treetocactus} computes,
for a given spanning tree \(T\), the maximum number of non-tree edges
that can be added to \(T\) while preserving the cactus property.
For the particular tree \(T'\), the algorithm therefore returns
\(|E(H)| - |E(T')|\).

Consequently, if we execute \cref{alg:treetocactus} for every
spanning tree of \(G\) and select the largest value obtained, we recover
the size of a maximum spanning cactus subgraph of \(G\), namely \(|E(H)|\).

The number of spanning trees of an \(n\)-vertex graph is at most
\(n^{\,n-2}\).
Since \cref{alg:treetocactus} runs in polynomial time for each
spanning tree, the overall running time is
\(\mathcal{O}^{\star}(n^{\,n-2})\).

This yields an algorithm for {\sc Edge Deletion to Cactus} with running time
\(\mathcal{O}^{\star}(n^{\,n-2})\).
\end{proof}

%% file: treetocactusalgo.tex
Using \cref{th:treeplusedges}, we will design an algorithm for the
problem {\sc Spanning Tree To Cactus}.

Let \(G =(V, E)\) be a connected graph with \(T\) as its spanning tree,
which is given as input. The elements of \(E \setminus E(T)\) are precisely
the non-tree edges.

Construct an auxiliary graph \(\mathcal{A}_T\) as follows:

\begin{itemize}
\item The vertices of \(\mathcal{A}_T\) correspond to the unique paths in
  \(T\) associated with the non-tree edges. For example, for a non-tree
  edge \(e = ab\), there is a vertex in \(\mathcal{A}_T\) corresponding
  to the unique path in \(T\) between \(a\) and \(b\).

\item Two vertices in \(\mathcal{A}_T\) are adjacent if and only if their
  corresponding tree paths share at least one edge.
\end{itemize}

Observe that, for any set of non-tree edges \(F (\subseteq E \setminus E(T))\),
the graph \(H : = (V, E(T) \cup F)\) is a cactus if and only if the vertices of
\(\mathcal{A}_T\) corresponding to the set \(F\) forms an independent set. This is
due to \cref{th:treeplusedges}.

\begin{lemma}
  \label{lm:whencactus}
  Let \(G = (V,E)\) be a connected graph and let \(T\) be a spanning
  tree of \(G\). Let \(F \subseteq E \setminus E(T)\). We construct an auxiliary graph
  \(\mathcal{A}_T\) as shown above. Then the graph \(H : = (V, E(T) \cup F)\)
  is a cactus if and only if the set of vertices of \(\mathcal{A}_T\) corresponding to \(F\) forms
  an independent set in \(\mathcal{A}_T\).
\end{lemma}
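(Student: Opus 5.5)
The plan is to derive \cref{lm:whencactus} directly from \cref{th:treeplusedges}, applied with the tree \(T\) and the edge set \(\hat{E} := F\). First I would check that the hypotheses hold. Since \(T\) is a spanning tree of \(G\), it is a tree on the vertex set \(V\). Since \(F \subseteq E \setminus E(T)\), every edge of \(F\) has both endpoints in \(V\) and \(F \cap E(T) = \emptyset\). The graph \((V, E(T) \cup F)\) is then exactly the graph \(H\) of \cref{th:treeplusedges}. The case \(F = \emptyset\) needs no separate treatment: \(H = T\) is a tree, hence a cactus, and the empty set is independent.

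Next I would make the correspondence between \(F\) and vertices of \(\mathcal{A}_T\) explicit. Each non-tree edge \(e = ab\) determines the unique \(a\)--\(b\) path \(P_e\) in \(T\). Because \(G\) is simple and \(e \notin E(T)\), this path has at least one edge, so it is a nontrivial path and gives a legitimate vertex of \(\mathcal{A}_T\). Two distinct edges \(e, e' \in F\) correspond to adjacent vertices of \(\mathcal{A}_T\) exactly when \(P_e\) and \(P_{e'}\) share an edge, by the construction of \(\mathcal{A}_T\).

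With this in place, both directions follow by contraposition of \cref{th:treeplusedges}. That theorem says \(H\) is not a cactus if and only if there are two edges \(e, e' \in F\) whose tree paths share an edge. Such a pair is exactly an edge of \(\mathcal{A}_T\) inside the vertex set corresponding to \(F\), that is, a witness that this set is not independent. Negating both sides gives the lemma.

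The only delicate point is the bookkeeping of the correspondence. Two distinct non-tree edges cannot have the same endpoints, because \(G\) is simple, so the map \(e \mapsto P_e\) from \(F\) to vertices of \(\mathcal{A}_T\) is injective. Consequently, "two edges of \(F\)" and "two distinct vertices of the corresponding set" mean the same thing, and no loop or multiplicity issues arise. I expect no real obstacle here, since the substantive work is already done in \cref{th:treeplusedges}.
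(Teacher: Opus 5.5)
Your proposal is correct and matches the paper's proof: both apply \cref{th:treeplusedges} with \(\hat{E} := F\) and observe that a pair of edges of \(F\) whose tree paths share an edge is exactly an edge of \(\mathcal{A}_T\) inside the vertex set corresponding to \(F\). Your additional checks (nontriviality of each tree path, injectivity of \(e \mapsto P_e\) because \(G\) is simple, and the case \(F=\emptyset\)) make explicit what the paper's two-line proof leaves implicit.
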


\begin{proof}

  By \cref{th:treeplusedges}, \(H\) is a cactus if and only if, there exists
  no two edges from \(F\) such that the unique paths between their endpoints
  intersect. Thus, \(H\) is a cactus if and only if the  set of vertices of
  \(\mathcal{A}_T\) corresponding to \(F\) forms an independent set in \(\mathcal{A}_T\).
  
\end{proof}

As stated earlier the graph \(\mathcal{A}_T\) is an EPT graph.
Moreover, Tarjan \cite{TARJAN1985221} showed that the size of a maximum
independent set in an EPT graph can be computed in time polynomial in the
size of the graph.

This yields a polynomial-time algorithm for our problem: the auxiliary
graph \(\mathcal{A}_T\) can be constructed in polynomial time, and the
maximum independent set of \(\mathcal{A}_T\) can then be computed in
polynomial time.

\begin{algorithm}
\caption{\textsc{SpanningTreeToCactus}\((G,T)\)}
\label{alg:treetocactus}
\begin{algorithmic}[1]
\Require A connected graph \(G=(V,E)\) and a spanning tree \(T\) of \(G\)
\Ensure The maximum number of non-tree edges that can be added to \(T\)
        such that the resulting graph remains a cactus

\State Construct the auxiliary graph \(\mathcal{A}_T\) as defined earlier
      (each vertex corresponds to a non-tree edge of \(G\), and two
      vertices are adjacent if and only if the corresponding tree paths in \(T\)
      share at least one edge)

\State Compute a maximum independent set of \(\mathcal{A}_T\) using the
algorithm of Tarjan \cite{TARJAN1985221}

\State \Return the size of this independent set
\end{algorithmic}
\end{algorithm}

\begin{lemma}
\label{lm:algo1}
\cref{alg:treetocactus} is correct and runs in polynomial time.
\end{lemma}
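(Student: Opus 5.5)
The proof separates correctness from running time, and correctness reduces to \cref{lm:whencactus}.

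\emph{Correctness.} Let \(F^\star \subseteq E \setminus E(T)\) be an optimal solution, that is, a maximum-size set such that \((V, E(T) \cup F^\star)\) is a cactus. There is a natural bijection between non-tree edges and vertices of \(\mathcal{A}_T\): each non-tree edge \(e = ab\) maps to the vertex representing the unique \(a\)--\(b\) path in \(T\).

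We must handle the case where two distinct non-tree edges might induce the same tree path. This cannot happen. The tree path determines its two endpoints \(a\) and \(b\), and since \(G\) is simple, there is at most one edge \(ab\). So the vertices of \(\mathcal{A}_T\) are in genuine one-to-one correspondence with \(E \setminus E(T)\).

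Also, every such path is nontrivial, because \(a \neq b\) and \(ab \notin E(T)\). Hence \(\mathcal{A}_T = \Gamma(\mathcal{P}, T)\) is an EPT graph in the sense of the preliminaries, which is what Tarjan's algorithm requires.

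By \cref{lm:whencactus}, a set \(F\) is feasible if and only if its image in \(\mathcal{A}_T\) is an independent set. The bijection therefore preserves cardinality and matches feasible sets exactly with independent sets. It follows that the maximum size of a feasible \(F\) equals the independence number \(\alpha(\mathcal{A}_T)\), and this is precisely the value the algorithm returns.

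If one also wants an actual optimal set, take the preimage of the maximum independent set found. It is feasible by the ``if'' direction of \cref{lm:whencactus}.

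\emph{Running time.} Let \(m = |E|\). There are at most \(m - n + 1\) non-tree edges, and for each one the tree path can be found in \(O(n)\) time, for example by BFS in \(T\) or by using parent pointers after rooting \(T\). Deciding adjacency for each of the \(O(m^2)\) pairs amounts to intersecting two edge sets of size \(O(n)\), which takes \(O(n)\) time with marking. So \(\mathcal{A}_T\) is built in \(O(m^2 n)\) time, and its size is polynomial in \(n\).

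Tarjan's algorithm \cite{TARJAN1985221} then computes a maximum independent set of an EPT graph in time polynomial in the size of the graph, or of its path representation, which we also hand it. The total running time is therefore polynomial.

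The step that needs the most care is confirming that the invocation of Tarjan's result is legitimate. That means checking that the input is an EPT graph with a representation available (our tree \(T\) and path family \(\mathcal{P}\)), and that the correspondence between non-tree edges and vertices is injective, which is exactly where simplicity of \(G\) is used. Everything else follows directly from \cref{lm:whencactus}.
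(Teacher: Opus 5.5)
Your proposal is correct and follows the same route as the paper. Correctness is read off from \cref{lm:whencactus}, and the running time comes from building $\mathcal{A}_T$ on $O(n^2)$ vertices by pairwise tree-path intersection tests and then invoking Tarjan's polynomial-time maximum independent set algorithm for EPT graphs; your extra checks (injectivity via simplicity of $G$, nontriviality of the paths, the explicit $O(m^2 n)$ construction bound) are details the paper leaves implicit.
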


\begin{proof}
The correctness of \cref{alg:treetocactus} follows directly from
\cref{lm:whencactus}.

We now analyse its running time.
Let \(T\) be a spanning tree of the input graph \(G\).
The number of non-tree edges in \(G\) is at most \(\mathcal{O}(n^2)\).
For each such edge, the algorithm constructs a vertex in the auxiliary graph
\(\mathcal{A}_T\).
Hence, the size of \(\mathcal{A}_T\) is bounded by \(\mathcal{O}(n^2)\).

The edges of \(\mathcal{A}_T\) are determined by testing pairwise intersections
of the corresponding tree paths, which can be carried out in polynomial time.
Therefore, the construction of \(\mathcal{A}_T\) takes polynomial time in \(n\).

As \(\mathcal{A}_T\) is an EPT graph , the size of a maximum independent set in it can be computed
in time polynomial in the size of the graph \cite{TARJAN1985221}.
Since \(\mathcal{A}_T\) has size polynomial in \(n\), this step also runs in
polynomial time.

Consequently, \cref{alg:treetocactus} runs in time polynomial in \(n\).
\end{proof}

%% file: new-exact-algo.tex
In this section, we prove \cref{thm:deletiontocactus} and thereby obtain,
using dynamic programming, an improved algorithm for
{\sc Edge Deletion to Cactus} with running time
\(\mathcal{O}^{\star}(3^n)\).

We observe that solving the above problem is equivalent to finding the size of a
maximum-sized spanning cactus subgraph of \(G\).
Indeed, the complement of the edge set of such a spanning cactus is a
minimum-size set of edges whose removal transforms \(G\) into a cactus.
Thus, we will try to solve the problem of finding the size of the
largest spanning cactus of the input graph.

For \(X \subseteq V\), where \(G[X]\) is connected, we define \(I(X)\)
to be the number of edges in a maximum-sized spanning cactus subgraph
of \(G[X]\).

Let \(x \in X\) be such that \(\deg_{G[X]}(x) \geq 2\), and let
\(A \subseteq X \setminus \{x\}\) be a non-empty set.
Set
\(
B := X \setminus (A \cup \{x\}).
\)

Suppose that \(B\) is also nonempty and that both \(G[A \cup \{x\}]\) and
\(G[B \cup \{x\}]\) are connected.
We then define \(J_X[x,A]\) by
\[
\begin{aligned}
J_X[x,A] := \max \bigl\{\, |E(\mathcal{C})| \;\big|\;&
\mathcal{C} \text{ is a spanning connected subgraph of } G[X],\\
& x \text{ is a cut vertex of } \mathcal{C},\\
& \mathcal{C}[A \cup \{x\}] \text{ and }
\mathcal{C}[B \cup \{x\}] \text{ are cacti}
\,\bigr\}.
\end{aligned}
\]

Intuitively, \(J_X[x,A]\) is the maximum number of edges in a spanning
cactus subgraph of \(G[X]\) that is forced to decompose into two cactus
subgraphs on \(A \cup \{x\}\) and \(B \cup \{x\}\), joined together only
at the vertex \(x\).

Before proceeding further, we must ensure that \(J_X[x,A]\) is well
defined for all \(x \in X\) with \(\deg_{G[X]}(x) \geq 2\) and for all subsets
\(A \subseteq X \setminus \{x\}\) such that both \(A\) and \(B\) are nonempty,
where \(B := X \setminus (A \cup \{x\})\), and such that both
\(G[A \cup \{x\}]\) and \(G[B \cup \{x\}]\) are connected.
The idea is to construct separate spanning trees of the two connected
subgraphs \(G[A \cup \{x\}]\) and \(G[B \cup \{x\}]\), each rooted at
\(x\), and then to glue them together at \(x\).
Their union is a connected spanning subgraph of \(G[X]\) in which the
removal of \(x\) separates the graph into two forests, so that \(x\) is
a cut vertex. A more detailed proof is in the appendix (refer to
\cref{lm:cutvertex}).

We now proceed with the design of the algorithm to find the maximum
sized spanning cactus subgraph of a connected graph.
For every set \(X\) such that \(G[X]\) is connected and \(|X| \leq 5\),
we compute \(I(X)\) and all values \(J_X[x,A]\) (whenever they are
defined) by brute force.
If \(|X| > 5\) and \(G[X]\) is a connected cactus subgraph, then we set
\[
I(X) := |E(G[X])|.
\]

We now consider the remaining case in which \(|X| > 5\) and \(G[X]\) is
connected but not a cactus.
Before describing the algorithm for this case, we need to establish some
structural properties of cactus graphs that will be used in the
design of the algorithm.

Any connected non-cactus graph, with at least five vertices, must have a maximum sized
spanning cactus subgraph with a cut
vertex. If a maximum spanning cactus already has more than one block,
then it necessarily has a cut vertex.
Otherwise it is a simple cycle, and since the original graph is not a cactus,
we can swap in a non-cycle edge and remove a cycle edge to obtain another
spanning cactus of the same size that now has a cut vertex. 

\begin{lemma}
  \label{lm:lm1}
  Let \(H = (V,E)\) be a connected graph that is not a cactus, and assume
  that \(|V| \geq 5\).
  Then there exists a maximum-sized spanning cactus subgraph of \(H\)
  that has a cut vertex. (See \cref{fig:maxcactuscut} for an example.)
\end{lemma}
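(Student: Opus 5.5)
The plan is to start from an arbitrary maximum-sized spanning cactus subgraph \(C\) of \(H\) and, if it lacks a cut vertex, modify it by a single edge exchange into another spanning cactus with the same number of edges that does have one. Since \(H\) is connected, a spanning cactus exists: any spanning tree is one. So \(C\) is well defined. If \(C\) has a cut vertex we are done.

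Otherwise \(C\) is a connected graph on \(|V| \ge 5\) vertices with no cut vertex, so \(C\) is biconnected and is therefore a single block. By the block characterization of cacti recalled in the introduction, every block of a cactus is a single edge or a simple cycle. With at least \(5\) vertices, \(C\) cannot be a single edge, so \(C\) is a Hamiltonian cycle of \(H\). Write it as \(v_0 v_1 \cdots v_{n-1} v_0\).

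Since \(H\) is not a cactus, \(H \neq C\), so there is an edge \(e \in E(H) \setminus E(C)\). Because \(H\) is simple, \(e\) is a chord of the cycle. After relabelling, \(e = v_0 v_k\) with \(2 \le k \le n-2\). Define
\[
C' := (C - v_0v_1) + v_0v_k .
\]
Then \(|E(C')| = |E(C)|\), and \(C'\) is spanning and connected.

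Next I would describe \(C'\) explicitly. It consists of the cycle \(v_0 v_k v_{k+1} \cdots v_{n-1} v_0\), which has at least three vertices because \(k \le n-2\). Attached to it at \(v_k\) is the path \(v_k v_{k-1} \cdots v_1\), which has at least one edge because \(k \ge 2\). Its blocks are therefore one cycle and some bridges, so \(C'\) is a cactus. Since its size equals that of \(C\), it is a maximum-sized spanning cactus of \(H\). Finally, \(v_k\) is a cut vertex of \(C'\): removing it separates \(v_{k-1}\), which lies on the pendant path and is distinct from \(v_0\) since \(k \ge 2\), from \(v_0\).

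The argument is routine. The only step needing care is the observation that a biconnected cactus on at least three vertices is a Hamiltonian cycle. A second point is choosing which cycle edge to drop: it must be incident to an endpoint of the chord, so that the pendant path is nonempty and the cut vertex actually appears. The hypothesis \(|V| \ge 5\) is used only to rule out the degenerate single-edge block and to ensure a chord can exist; \(|V| \ge 4\) would already suffice.
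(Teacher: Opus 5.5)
Your proof is correct and follows essentially the same route as the paper: a maximum spanning cactus with no cut vertex must be a Hamiltonian cycle, and swapping a chord in for one cycle edge gives an equally large spanning cactus that has a cut vertex. The only difference is the choice of edge to drop. The paper removes a cycle edge \(u'v'\) chosen away from the chord's endpoints, which is where it uses \(|V|\ge 5\). You remove \(v_0v_1\) at an endpoint of the chord and check the resulting cycle-plus-pendant-path structure explicitly, which, as you note, only needs \(|V|\ge 4\).
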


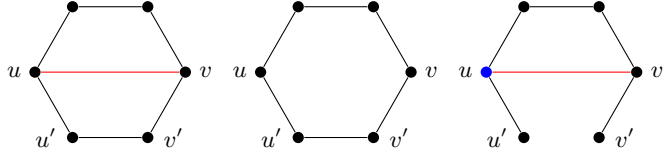
\begin{figure}[h]
\centering
\begin{tikzpicture}[scale=1, every node/.style={circle, fill=black, inner sep=1.5pt}]

\node[label=left:\(u'\)] (a1) at (0,0) {};
\node[label=right:\(v'\)] (a2) at (1,0) {};
\node[label=right:\(v\)] (a3) at (1.5,0.87) {};
\node (a4) at (1,1.74) {};
\node (a5) at (0,1.74) {};
\node[label=left:\(u\)] (a6) at (-0.5,0.87) {};

\draw (a1)--(a2)--(a3)--(a4)--(a5)--(a6)--(a1);
\draw[red] (a6)--(a3);

\node[label=left:\(u'\)] (b1) at (3,0) {};
\node[label=right:\(v'\)] (b2) at (4,0) {};
\node[label=right:\(v\)] (b3) at (4.5,0.87) {};
\node (b4) at (4,1.74) {};
\node (b5) at (3,1.74) {};
\node[label=left:\(u\)] (b6) at (2.5,0.87) {};

\draw (b1)--(b2)--(b3)--(b4)--(b5)--(b6)--(b1);

\node[label=left:\(u'\)] (c1) at (6,0) {};
\node[label=right:\(v'\)] (c2) at (7,0) {};
\node[label=right:\(v\)] (c3) at (7.5,0.87) {};
\node (c4) at (7,1.74) {};
\node (c5) at (6,1.74) {};
\node[fill=blue,label=left:\(u\)] (c6) at (5.5,0.87) {};

\draw (c2)--(c3)--(c4)--(c5)--(c6)--(c1);
\draw[red] (c6)--(c3);

\end{tikzpicture}
\caption{\small From left to right: the graph \(H\), a spanning cycle \(\mathcal{C}\) of
\(H\), and a spanning cactus subgraph \(\mathcal{C}'\) of \(H\) with cut
vertex \(u\), respectively.}
\label{fig:maxcactuscut}
\end{figure}

\begin{proof}
Let \(\mathcal{C}\) be a maximum-sized spanning cactus subgraph of \(H\).

If \(\mathcal{C}\) has more than one block, then it already has a cut
vertex and there is nothing to prove.
Otherwise, \(\mathcal{C}\) consists of a single block and is therefore a
simple cycle.

Since \(H\) is not a cactus, there exists an edge \(e = uv \in E(H)\)
that is not contained in \(\mathcal{C}\).
Let \(e' = u'v' \in E(\mathcal{C})\).
Without loss of generality, we may assume that \(u \neq u'\) and
\(v \neq v'\), since \(|V| \geq 5\).

Define
\[
\mathcal{C}' :=
\bigl(V,\; (E(\mathcal{C}) \cup \{e\}) \setminus \{e'\}\bigr).
\]
Then \(\mathcal{C}'\) is a spanning cactus subgraph of \(H\) with the
same number of edges as \(\mathcal{C}\).

We now show that \(u\) is a cut vertex of \(\mathcal{C}'\).
Removing the edge \(e'\) from the cycle \(\mathcal{C}\) turns it into a
path.
Adding the edge \(e = uv\) reconnects this path at the vertex \(u\) but
does not create a second internally disjoint path between the vertices
\(u\) and \(u'\).
Consequently, in the graph \(\mathcal{C}' - u\), the vertices that lie
on the former path between \(u\) and \(u'\) are separated from the rest
of the graph.

Thus, deleting \(u\) disconnects \(\mathcal{C}'\), and therefore \(u\)
is a cut vertex of \(\mathcal{C}'\).
\end{proof}

In the following lemma, we show that gluing two cactus graphs together at
a single common vertex preserves the cactus property.
Intuitively, every cycle in the resulting graph remains entirely within
one of the original cacti, and therefore no two cycles can share an
edge.

\begin{lemma}
	\label{lm:joincacti}
	Let \(\mathcal{C}\) and \(\mathcal{C}'\) be two edge-disjoint
	cacti, such that \(V(\mathcal{C}) \cap V(\mathcal{C}') = \{v\}\).
        Then \(\mathcal{C} \cup \mathcal{C}'\) is also a cactus.
\end{lemma}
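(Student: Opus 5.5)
To prove \cref{lm:joincacti}, I will check the two defining properties of a cactus for \(\mathcal{C} \cup \mathcal{C}'\): that it is connected, and that every edge lies on at most one cycle. Connectivity is immediate. Both \(\mathcal{C}\) and \(\mathcal{C}'\) are connected and share the vertex \(v\). So any vertex of \(\mathcal{C}\) reaches \(v\) inside \(\mathcal{C}\), and any vertex of \(\mathcal{C}'\) reaches \(v\) inside \(\mathcal{C}'\).

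The central step is to show that every simple cycle \(D\) of \(\mathcal{C} \cup \mathcal{C}'\) lies entirely in \(\mathcal{C}\) or entirely in \(\mathcal{C}'\). Suppose instead that \(D\) uses an edge of \(\mathcal{C}\) and an edge of \(\mathcal{C}'\). The edge sets are disjoint, so as we traverse \(D\), the edges split into maximal consecutive runs from \(E(\mathcal{C})\) and runs from \(E(\mathcal{C}')\). There are at least two runs of each kind, alternating. Wherever a run of \(\mathcal{C}\)-edges meets a run of \(\mathcal{C}'\)-edges, the shared vertex is incident to an edge of each graph. It therefore lies in \(V(\mathcal{C}) \cap V(\mathcal{C}') = \{v\}\), so every such transition vertex equals \(v\). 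Since there are at least two transitions, \(D\) visits \(v\) at two distinct positions, which contradicts \(D\) being a simple cycle. This is the main obstacle: the bookkeeping of "transition vertices" must be stated cleanly, including the cyclic wrap-around. Counting transitions around the cycle gives an even number, which is at least \(2\).

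With this in hand, let \(e\) be any edge of \(\mathcal{C} \cup \mathcal{C}'\), and say \(e \in E(\mathcal{C})\) by symmetry. Any cycle containing \(e\) lies wholly in one of the two graphs by the central step. It cannot lie in \(\mathcal{C}'\), because \(e \notin E(\mathcal{C}')\), so it is a cycle of \(\mathcal{C}\). Since \(\mathcal{C}\) is a cactus, at most one cycle of \(\mathcal{C}\) contains \(e\). Hence \(e\) lies on at most one cycle of the union, and \(\mathcal{C} \cup \mathcal{C}'\) is a cactus.

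Alternatively, one could use the block characterization from the introduction: \(v\) is a cut vertex of the union (or an endpoint if one side is trivial), so the blocks of the union are exactly the blocks of \(\mathcal{C}\) together with those of \(\mathcal{C}'\). Each block is an edge or a cycle. I would keep the direct cycle argument as the main proof, since it is self-contained.
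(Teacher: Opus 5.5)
Your proof is correct and follows the same route as the paper's, which also argues that every cycle of \(\mathcal{C} \cup \mathcal{C}'\) lies wholly in \(\mathcal{C}\) or wholly in \(\mathcal{C}'\) and then invokes the cactus property of each part; the paper only asserts that first fact, whereas you justify it with the transition-vertex argument and also check connectivity. One minor slip: a cycle mixing the two edge sets need only have one run of each kind, not two, but the fact you actually use is correct, namely that there are at least two transition vertices and all of them equal \(v\).
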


\begin{proof}
        Any cycle in \(\mathcal{C} \cup \mathcal{C}'\), has all its edges
        either in \(\mathcal{C}\) or in \(\mathcal{C}'\). Thus, if two cycles
        in \(\mathcal{C} \cup \mathcal{C}'\) share edges, then it would mean
        either \(\mathcal{C}\) or \(\mathcal{C}'\) is not a cactus, which is
        a contradiction.
\end{proof}

The next two theorems capture the additive behaviour of optimal cactus subgraphs
when a connected graph is decomposed at a cut vertex. When such a
decomposition separates the graph into two connected parts, any maximal
cactus structure respecting this separation consists of optimal cactus
subgraphs on each part, and its size is the sum of their sizes.

\begin{theorem}
  \label{thm:thm1}
  Let \(G = (V,E)\) be a connected graph, and let \(X \subseteq V\) be a
  set such that \(|X| > 5\) and \(G[X]\) is connected. Then
	\[
	I(X) = \underset{x \in X,\, A \subseteq X}{\emph{max}} J_X[x,A].
	\]
\end{theorem}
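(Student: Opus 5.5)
The plan is to prove the two inequalities separately, taking as maximisation range only the pairs \((x,A)\) for which \(J_X[x,A]\) is defined. I will also assume that \(G[X]\) is not a cactus, since that is the only case in which the algorithm uses this theorem. The assumption matters: if \(G[X]\) were a single cycle on more than five vertices, it would have no cut vertex and the equality could fail.

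I read the feasible subgraphs \(\mathcal{C}\) in the definition of \(J_X[x,A]\) as the intuitive description says: \(\mathcal{C}\) decomposes as \(\mathcal{C}[A\cup\{x\}] \cup \mathcal{C}[B\cup\{x\}]\), so that no edge of \(\mathcal{C}\) joins \(A\) to \(B\) and \(x\) separates \(A\) from \(B\). Making this explicit is the main subtle point of the proof. Without it, the condition ``\(x\) is a cut vertex of \(\mathcal{C}\)'' by itself would allow edges between \(A\) and \(B\), and the resulting subgraph need not be a cactus.

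\emph{Upper bound, \(\max J_X[x,A] \le I(X)\).} Fix a defined pair \((x,A)\) and a feasible \(\mathcal{C}\) attaining \(J_X[x,A]\). The two pieces \(\mathcal{C}[A\cup\{x\}]\) and \(\mathcal{C}[B\cup\{x\}]\) are cacti by feasibility. They are edge-disjoint and meet exactly in \(x\), because \(A\cap B=\emptyset\). By \cref{lm:joincacti}, their union \(\mathcal{C}\) is a cactus. It is spanning in \(G[X]\) and connected, so \(|E(\mathcal{C})| \le I(X)\).

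\emph{Lower bound, \(I(X) \le \max J_X[x,A]\).} Since \(|X|>5\) and \(G[X]\) is connected and not a cactus, \cref{lm:lm1} gives a maximum-sized spanning cactus \(\mathcal{C}^\star\) of \(G[X]\) with a cut vertex \(x\).
\begin{enumerate}
\item Let \(A\) be the vertex set of one component of \(\mathcal{C}^\star - x\), and let \(B := X\setminus(A\cup\{x\})\). Then \(B\) is nonempty, because \(x\) is a cut vertex.
\item Each component of \(\mathcal{C}^\star - x\) contains a neighbour of \(x\), since \(\mathcal{C}^\star\) is connected. Hence \(\mathcal{C}^\star[A\cup\{x\}]\) and \(\mathcal{C}^\star[B\cup\{x\}]\) are connected.
\item It follows that \(G[A\cup\{x\}]\) and \(G[B\cup\{x\}]\) are connected, and that \(\deg_{G[X]}(x)\ge 2\). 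So \(J_X[x,A]\) is defined.
\item No edge of \(\mathcal{C}^\star\) joins \(A\) to \(B\), since \(A\) is a whole component of \(\mathcal{C}^\star - x\). Thus \(\mathcal{C}^\star\) splits as required at \(x\).
\item Both pieces are connected subgraphs of a cactus. Any two of their cycles are cycles of \(\mathcal{C}^\star\) and so share no edge, which makes each piece a cactus.
\end{enumerate}
Therefore \(\mathcal{C}^\star\) is feasible for \(J_X[x,A]\), and \(I(X) = |E(\mathcal{C}^\star)| \le J_X[x,A]\). Combining the two bounds gives the equality.
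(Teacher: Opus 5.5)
Your proposal is correct and follows the paper's proof. For the lower bound you take an optimal spanning cactus with a cut vertex from \cref{lm:lm1} and split it along one component of \(\mathcal{C}^\star - x\); for the upper bound you glue the two cactus pieces back together with \cref{lm:joincacti}. The two assumptions you make explicit are the ones the paper's argument uses tacitly: that \(G[X]\) is not a cactus, and that a feasible \(\mathcal{C}\) has no edge between \(A\) and \(B\). Both are genuinely needed, since a cycle \(G[X]\) on more than five vertices breaks the first inequality and cross edges between \(A\) and \(B\) break the second.
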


\begin{proof}
We first show that
\[
I(X) \leq \underset{x \in X,\, A \subseteq X}{\emph{max}} J_X[x,A].
\]

Let \(\mathcal{C}\) be a spanning cactus subgraph of \(G[X]\) with the
maximum possible number of edges. By definition,
\[
|E(\mathcal{C})| = I(X).
\]
By \cref{lm:lm1}, we can assume that the cactus \(\mathcal{C}\) has at least one cut vertex;
fix such a vertex and denote it by \(x\).
Let \(A\) be one of the connected components of \(\mathcal{C} - x\), and
set
\[
B := X \setminus (A \cup \{x\}).
\]

Definitely both \(A\) and \(B\) are non-empty and \(\deg_{G[X]}(x) \geq 2\).
Consider the induced subgraphs \(\mathcal{C}[A \cup \{x\}]\) and
\(\mathcal{C}[B \cup \{x\}]\).
These are cactus subgraphs of \(G[A \cup \{x\}]\) and
\(G[B \cup \{x\}]\), respectively, and hence both
\(G[A \cup \{x\}]\) and \(G[B \cup \{x\}]\) are connected.
Moreover, \(\mathcal{C}\) is exactly the union of
\(\mathcal{C}[A \cup \{x\}]\) and \(\mathcal{C}[B \cup \{x\}]\).
It follows that the number of edges of \(\mathcal{C}\) is bounded above
by \(J_X[x,A]\) for this choice of \(x\) and \(A\), and therefore
\[
I(X) \leq \underset{x \in X,\, A \subseteq X}{\emph{max}} J_X[x,A].
\]

We now prove the reverse inequality.
Let \(x' \in X\) and let \(A' \subseteq X \setminus \{x'\}\) be such that
\(A' \neq \emptyset\) and the induced subgraphs
\(G[A' \cup \{x'\}]\) and \(G[B' \cup \{x'\}]\) are connected, where
\[
B' := X \setminus (A' \cup \{x'\}).
\]

Let \(B' \neq \emptyset\).
Let \(\mathcal{C}'\) be a subgraph of \(G[X]\) with the maximum number of
edges among all spanning subgraphs of \(G[X]\) that have \(x'\) as a cut
vertex and for which \(\mathcal{C}'[A' \cup \{x'\}]\) and
\(\mathcal{C}'[B' \cup \{x'\}]\) are cacti.
By construction,
\[
|E(\mathcal{C}')|
=  J_X[x',A'].
\]
Since \(\mathcal{C}'\) is a cactus subgraph of \(G[X]\) by
\cref{lm:joincacti}, its number of edges is at most \(I(X)\).
And \(x'\) and \(A'\) were chosen arbitrarily, we have
\[
\underset{x \in X,\, A \subseteq X}{\emph{max}} J_X[x,A] \leq I(X).
\]

Combining the two inequalities completes the proof.
\end{proof}

\begin{theorem}
  \label{thm:thm2}
  Let \(G = (V,E)\) be a connected graph, and let \(X \subseteq V\) be
  such that \(|X| > 5\) and \(G[X]\) is connected.
  Let \(x \in X\) satisfy \(\deg_{G[X]}(x) \geq 2\), and let
  \(A \subseteq X \setminus \{x\}\) be a nonempty set.
  Set
  \[
  B := X \setminus (A \cup \{x\}),
  \]
  and suppose that \(B\) is nonempty and that both
  \(G[A \cup \{x\}]\) and \(G[B \cup \{x\}]\) are connected.
  Then
  \[
  J_X[x,A] = I(A \cup \{x\}) + I(B \cup \{x\}).
  \]
\end{theorem}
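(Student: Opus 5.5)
We prove the equality by two inequalities, in the same style as the proof of \cref{thm:thm1}. Write \(X_A := A \cup \{x\}\) and \(X_B := B \cup \{x\}\). The key structural fact is that, since \(A\), \(B\), \(\{x\}\) partition \(X\), we have \(X_A \cap X_B = \{x\}\). Hence the edge sets of \(G[X_A]\) and \(G[X_B]\) are disjoint, and the only edges of \(G[X]\) in neither are the edges between \(A\) and \(B\).

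For the inequality \(J_X[x,A] \leq I(X_A) + I(X_B)\), take a subgraph \(\mathcal{C}\) attaining \(J_X[x,A]\). By definition, \(\mathcal{C}\) is a spanning connected subgraph of \(G[X]\) with \(x\) a cut vertex, and \(\mathcal{C}[X_A]\) and \(\mathcal{C}[X_B]\) are cacti.

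\begin{itemize}
\item \textbf{Each side is a spanning cactus.} We must show that \(\mathcal{C}[X_A]\) and \(\mathcal{C}[X_B]\) are connected. The definition only says they are cacti; the paper defines a cactus as connected, so connectivity comes for free if we read ``cactus'' literally. This is the convention I will use.
\item \textbf{No \(A\)--\(B\) edges.} We must also show that \(\mathcal{C}\) has no edge between \(A\) and \(B\), since otherwise \(E(\mathcal{C})\) is not the disjoint union of the two induced edge sets. This is the main obstacle, because the definition of \(J_X[x,A]\) does not explicitly forbid such edges.
\item \textbf{Handling \(A\)--\(B\) edges.} My approach is to use the cut-vertex and cactus conditions. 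Suppose \(ab \in E(\mathcal{C})\) with \(a \in A\) and \(b \in B\). Connectivity of \(\mathcal{C}[X_A]\) gives an \(a\)--\(x\) path inside \(X_A\), and connectivity of \(\mathcal{C}[X_B]\) gives an \(x\)--\(b\) path inside \(X_B\). Together with \(ab\) these form a cycle through \(x\). I would try to show that this makes \(\mathcal{C} - x\) connected, contradicting that \(x\) is a cut vertex. That conclusion is not immediate when \(A\) itself splits in \(\mathcal{C}-x\).
\item \textbf{Fallback.} If that argument fails, I would read the definition as intended in the intuitive remark after it: \(\mathcal{C}\) is the union of the two cacti glued only at \(x\). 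Under that reading the no-\(A\)--\(B\)-edge property is part of the definition.
\end{itemize}

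Granting the no-\(A\)--\(B\)-edge property, we get \(E(\mathcal{C}) = E(\mathcal{C}[X_A]) \sqcup E(\mathcal{C}[X_B])\). Each part is a spanning cactus subgraph of \(G[X_A]\), respectively \(G[X_B]\), so \(|E(\mathcal{C})| \leq I(X_A) + I(X_B)\).

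For the inequality \(J_X[x,A] \geq I(X_A) + I(X_B)\), take maximum spanning cacti \(\mathcal{C}_A\) of \(G[X_A]\) and \(\mathcal{C}_B\) of \(G[X_B]\); these exist because both induced graphs are connected. They are edge-disjoint and meet exactly in \(x\). By \cref{lm:joincacti}, \(\mathcal{C} := \mathcal{C}_A \cup \mathcal{C}_B\) is a cactus, and it is a connected spanning subgraph of \(G[X]\).

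\begin{itemize}
\item \textbf{\(x\) is a cut vertex.} Every path in \(\mathcal{C}\) from a vertex of \(A\) to a vertex of \(B\) must pass through \(x\), since \(\mathcal{C}\) has no \(A\)--\(B\) edge. As \(A\) and \(B\) are both nonempty, \(\mathcal{C} - x\) is disconnected.
\item \textbf{The sides are the two cacti.} We have \(\mathcal{C}[X_A] = \mathcal{C}_A\) and \(\mathcal{C}[X_B] = \mathcal{C}_B\), both cacti.
\end{itemize}

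So \(\mathcal{C}\) is feasible in the definition of \(J_X[x,A]\), with \(|E(\mathcal{C})| = I(X_A) + I(X_B)\). This gives the reverse inequality and completes the proof.
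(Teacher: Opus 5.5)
Your two-inequality structure, and your lower bound obtained by gluing maximum spanning cacti of $G[A\cup\{x\}]$ and $G[B\cup\{x\}]$ at $x$ via \cref{lm:joincacti}, are exactly the paper's proof. The one real issue is the one you flagged, and your primary way of handling it cannot work. The cut-vertex condition does not exclude $A$--$B$ edges, and under the literal definition of $J_X[x,A]$ the statement is false. Take $X=\{x,a_1,a_2,a_3,b_1,b_2\}$ with $E(G[X])=\{xa_1,xa_2,xa_3,xb_1,xb_2,a_1b_1\}$, $A=\{a_1,a_2,a_3\}$ and $B=\{b_1,b_2\}$; all the hypotheses hold. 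Set $\mathcal{C}=G[X]$. Then $\mathcal{C}$ is spanning and connected. Also $\mathcal{C}-x$ has components $\{a_1,b_1\}$, $\{a_2\}$, $\{a_3\}$, $\{b_2\}$, so $x$ is a cut vertex. Both induced subgraphs $\mathcal{C}[A\cup\{x\}]$ and $\mathcal{C}[B\cup\{x\}]$ are stars, hence cacti. Thus $J_X[x,A]=6$, while $I(A\cup\{x\})+I(B\cup\{x\})=3+2=5$. The triangle $xa_1b_1$ through $x$ does not make $\mathcal{C}-x$ connected, because other vertices of $A$ and $B$ hang off $x$ alone. This is precisely the ``$A$ splits in $\mathcal{C}-x$'' situation you worried about.

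So your fallback is not optional. The definition must be read as requiring that $\mathcal{C}$ has no edge between $A$ and $B$, that is, $E(\mathcal{C})=E(\mathcal{C}[A\cup\{x\}])\cup E(\mathcal{C}[B\cup\{x\}])$, as in the informal remark after the definition. The paper's own proof makes the same tacit assumption. It passes from the bounds on the two induced subgraphs directly to $J_X[x,A]\le I(A\cup\{x\})+I(B\cup\{x\})$ without justifying that those two induced subgraphs carry all edges of $\mathcal{C}$. With that reading adopted, and with cacti taken to be connected as in the paper, your argument is complete and coincides with the paper's. Drop the attempted cut-vertex contradiction and state the amended definition explicitly instead.
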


\begin{proof}
Let \(\mathcal{C}\) be a maximum-sized connected subgraph of \(G[X]\)
such that \(x\) is a cut vertex of \(\mathcal{C}\), and such that
\(\mathcal{C}[A \cup \{x\}]\) and \(\mathcal{C}[B \cup \{x\}]\) are
cacti. By definition,
\[
|E(\mathcal{C})| = J_X[x,A].
\]
Since both \(\mathcal{C}[A \cup \{x\}]\) and
\(\mathcal{C}[B \cup \{x\}]\) are cacti, we have
\[
|E(\mathcal{C}[A \cup \{x\}])| \leq I(A \cup \{x\})
\quad \text{and} \quad
|E(\mathcal{C}[B \cup \{x\}])| \leq I(B \cup \{x\}).
\]
Therefore,
\begin{equation}
  \label{eq:ineq1}
J_X[x,A] \leq I(A \cup \{x\}) + I(B \cup \{x\}).
\end{equation}

Conversely, let \(\mathcal{C}'\) be a maximum-sized spanning cactus
subgraph of \(G[A \cup \{x\}]\), and let \(\mathcal{C}''\) be a
maximum-sized spanning cactus subgraph of \(G[B \cup \{x\}]\).
Define
\[
\mathcal{C}^{\star} := \mathcal{C}' \cup \mathcal{C}''.
\]
By \cref{lm:joincacti}, \(\mathcal{C}^{\star}\) is a cactus subgraph of
\(G[X]\) and spans \(G[X]\).
Moreover, deleting \(x\) from \(\mathcal{C}^{\star}\) disconnects the
graph, since there is no edge with one endpoint in
\(\mathcal{C}'\) and the other in \(\mathcal{C}''\).
Hence, \(x\) is a cut vertex of \(\mathcal{C}^{\star}\).

Thus \(|E(\mathcal{C}^{\star})| \leq J_X[x,A]\), and hence
\begin{equation}
  \label{eq:ineq2}
I(A \cup \{x\}) + I(B \cup \{x\}) \leq J_X[x,A].
\end{equation}

Combining the inequalities \ref{eq:ineq1} and \ref{eq:ineq2}, we
get the proof.
\end{proof}

Combining Theorems \ref{thm:thm1} and \ref{thm:thm2}, we obtain a
procedure for computing \(I(V)\), and hence the number of edges in a
maximum-sized spanning cactus subgraph of \(G\). Intuitively, the
algorithm performs dynamic programming over all vertex
subsets \(X \subseteq V\) that induce connected subgraphs, and for each
such set it considers all possible ways of choosing a cut vertex and
splitting the remaining vertices into two connected parts.
This leads to a branching behaviour in which each vertex may belong to
one of three roles (outside the current set, in the first part, or in
the second part), yielding a total running time of order \(3^n\).
A pseudocode description of the algorithm along with more
detailed proofs of the necessary theorems have been included in
the appendix(refer to \cref{sec:thedpalgo}).

Thus we have established the following theorem.

\deletiontocactus*

%% file: conclusion.tex
We studied two closely related graph modification problems whose objective is to transform a connected graph
into a cactus.

For the {\sc Edge Deletion to Cactus} problem, which was known to be NP-hard, we presented an exact algorithm
running in time
\(\mathcal{O}^{\star}(3^n)\).
We also resolved the previously open problem {\sc Spanning Tree to Cactus} by designing a polynomial-time
algorithm based on an auxiliary EPT graph construction and the computation of a maximum independent set.
As a consequence, this result yields an alternative exact algorithm for {\sc Edge Deletion to Cactus} with
running time
\(\mathcal{O}^{\star}(n^{\,n-2})\).
Although this algorithm is less efficient than our \(\mathcal{O}^{\star}(3^n)\) algorithm, it still improves
substantially over brute-force enumeration of all subsets of edges. An obvious
open question would be to obtain an algorithm for {\sc Edge Deletion to Cactus} which runs in time
\(\mathcal{O}^{\star}(c^n)\), for some \(c < 3\).

Another important direction for future research is the parameterised version of {\sc Edge Deletion to Cactus}, where
the parameter \(k\) is the number of deleted edges.
We observe that if an input graph \(G\) is a {\sc Yes}-instance of the parameterised version of
{\sc Edge Deletion to Cactus}, then the treewidth of \(G\) is at most \(2k+2\)~\cite{KOCH2024122}.
Since cactus graphs are exactly those graphs that exclude the diamond as a minor, and since minor-closed
graph classes are expressible in monadic second-order logic, the problem admits a fixed-parameter
tractable algorithm by Courcelle’s Theorem~\cite{zbMATH05852772}.
However, it remains an open challenge to design an explicit FPT algorithm with a practical and
small dependence on \(k\).

Further directions include weighted variants of the problem, where edges have costs and the goal
is to minimise the total deletion cost, as well as extensions to related graph modification problems
targeting cactus-like graph classes.

We hope that the structural insights and algorithmic techniques developed in this work will
stimulate further progress on cactus modification problems and related graph editing questions.

%% file: appendix.tex
\section{Properties of Edge Minimal Cacti}
 \label{sec:edgeminimal}
 \input{cycle-containing-edges}

 \section{Relationship between Trees and Cacti}
 \label{sec:relationship}
 \input{maintreetheorem}

\section{Some results needed for the Dynamic Programming Algorithm}

\begin{lemma}
  \label{lm:cutvertex}
Let \(H = (V,E)\) be a connected graph, and let \(x \in V\) be a vertex
such that \(\deg_H(x) \geq 2\). Let \(A \subseteq V \setminus \{x\}\) be nonempty, and set
\(B := V \setminus (A \cup \{x\})\), with \(B\) also nonempty.
Suppose that both \(H[A \cup \{x\}]\) and \(H[B \cup \{x\}]\) are
connected. Then there exists a spanning connected subgraph of \(H\),
say \(T\), such that \(x\) is one of the cut vertices of \(T\), and such
that both \(T[A \cup \{x\}]\) and \(T[B \cup \{x\}]\) are cacti.
\end{lemma}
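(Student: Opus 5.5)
Take a spanning tree \(T_A\) of \(H[A \cup \{x\}]\) and a spanning tree \(T_B\) of \(H[B \cup \{x\}]\); both exist because these induced subgraphs are connected by hypothesis. Set \(T := T_A \cup T_B\). We then check three things: \(T\) is a spanning connected subgraph of \(H\), \(x\) is a cut vertex of \(T\), and the two restrictions \(T[A\cup\{x\}]\) and \(T[B\cup\{x\}]\) are cacti.

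\emph{Spanning and connected.} Since \(V = A \cup B \cup \{x\}\), the vertex set of \(T\) is all of \(V\). The graphs \(T_A\) and \(T_B\) are connected and share the vertex \(x\), so their union is connected. Every edge of \(T\) is an edge of \(H\), so \(T\) is a spanning connected subgraph of \(H\).

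\emph{The restrictions.} Because \(A\) and \(B\) are disjoint, \(V(T_A) \cap V(T_B) = \{x\}\), and the two trees are edge-disjoint. Hence every edge of \(T\) inside \(A \cup \{x\}\) comes from \(T_A\): an edge of \(T_B\) has both endpoints in \(B \cup \{x\}\), and it cannot have both endpoints equal to \(x\). So \(T[A\cup\{x\}] = T_A\), and symmetrically \(T[B\cup\{x\}] = T_B\). Each is a tree, and a tree is a cactus (connected, no cycles). By \cref{lm:joincacti}, \(T\) is itself a cactus; in fact it is a tree.

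\emph{Cut vertex.} \(T\) has no edge between \(A\) and \(B\), since all its edges lie in \(T_A\) or \(T_B\). Therefore \(T - x\) has no path from a vertex of the nonempty set \(A\) to a vertex of the nonempty set \(B\). So \(T - x\) is disconnected and \(x\) is a cut vertex of \(T\).

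There is no real obstacle here. The only points needing care are the identification \(T[A\cup\{x\}] = T_A\) and the use of nonemptiness of both \(A\) and \(B\) in the cut-vertex step. The hypothesis \(\deg_H(x) \ge 2\) is not needed for the construction; it follows automatically, since \(x\) has a neighbour in each of \(T_A\) and \(T_B\).
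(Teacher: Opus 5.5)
Your proposal is correct and is essentially the paper's proof: glue spanning trees of \(H[A \cup \{x\}]\) and \(H[B \cup \{x\}]\) at \(x\), then note that the union is spanning and connected, that \(x\) separates the nonempty sets \(A\) and \(B\), and that the two restrictions are exactly the two trees. Your explicit check of \(T[A\cup\{x\}] = T_A\) and your remark that the hypothesis \(\deg_H(x) \geq 2\) is redundant are both accurate additions to the same argument.
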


\begin{proof}
Let \(x \in V\) be a vertex with \(\deg(x)_H \geq 2\), and let
\(A \subseteq V \setminus \{x\}\) be nonempty. Set
\[
B := V \setminus (A \cup \{x\}),
\]
and suppose that \(B\) is also nonempty. Assume further that both
\(H[A \cup \{x\}]\) and \(H[B \cup \{x\}]\) are connected.

Let \(T_1\) be a spanning tree of \(H[A \cup \{x\}]\) rooted
at \(x\), and let \(T_2\) be a spanning tree of
\(H[B \cup \{x\}]\) rooted at \(x\).
Define
\[
T := T_1 \cup T_2.
\]
By construction, \(T\) is a spanning subgraph of \(H\), and it is
connected. Moreover, deleting \(x\) from \(T\) disconnects the graph \(T\),
and hence \(x\) is a cut vertex of \(T\).
Finally, we observe that \(T[A \cup \{x\}] = T_1\) and
\(T[B \cup \{x\}] = T_2\), which are both trees and therefore cacti.
\end{proof}

\subsection{The Dynamic Programming Algorithm}
\label{sec:thedpalgo}

We describe a dynamic programming procedure for computing the value
\(I(X)\) for every vertex set \(X \subseteq V\) such that the induced
subgraph \(G[X]\) is connected.

For all sets \(X \subseteq V\) with \(|X| \leq 5\) and with \(G[X]\)
connected, the values \(I(X)\) and all defined values \(J_X[x,A]\) are
computed by brute force. For every vertex set \(X \subseteq V\) such that the induced subgraph
\(G[X]\) is a cactus (a property that can be tested in polynomial time),
we assign
\[
I(X) := |E(G[X])|.
\]
For all remaining sets \(X\), we initialise \(I(X) := 0\).

For each vertex set \(X \subseteq V\) such that \(G[X]\) is connected,
we invoke the procedure \textsc{FindMaxCactus}\((X)\) and assign its
return value to \(I(X)\).

\begin{algorithm}
\caption{\textsc{FindMaxCactus}\((X)\)}
\begin{algorithmic}[1]
\If{\(|X| \leq 5\) or \(G[X]\) is a cactus}
  \State \Return \(I(X)\)
  \EndIf

\ForAll{\(x \in X\) such that \(\deg_{G[X]} \geq 2\)}
  \ForAll{\(A \subseteq X \setminus \{x\}\)}
    \State \(B \gets X \setminus (A \cup \{x\})\)
    \If{\(A \neq \emptyset\) and \(B \neq \emptyset\)}
      \If{\(G[A \cup \{x\}]\) and \(G[B \cup \{x\}]\) are connected}
        \State \(s \gets \textsc{FindCutCactus}(x,A,X)\)
        \If{\(s > I(X)\)}
          \State \(I(X) \gets s\)
        \EndIf
      \EndIf
    \EndIf
  \EndFor
\EndFor

\State \Return \(I(X)\)
\end{algorithmic}
\end{algorithm}

The procedure \textsc{FindCutCactus}\((x,A,X)\) computes the contribution
of a decomposition of \(X\) into the two vertex sets
\(A \cup \{x\}\) and \(B \cup \{x\}\), where \(B := X \setminus (A \cup \{x\})\),
under the assumption that both induced subgraphs
\(G[A \cup \{x\}]\) and \(G[B \cup \{x\}]\) are connected. In other words, it calculates
the value \(J_X[x, A]\), for the choice of \(x\) and \(A\).

\begin{algorithm}
\caption{\textsc{FindCutCactus}\((x,A,X)\)}
\begin{algorithmic}[1]
\State \(B \gets X \setminus (A \cup \{x\})\)

\If{\(|A \cup \{x\}| \leq 5\)}
  \State \(s_1 \gets I(A \cup \{x\})\)
\Else
  \State \(s_1 \gets \textsc{FindMaxCactus}(A \cup \{x\})\)
\EndIf

\If{\(|B \cup \{x\}| \leq 5\)}
  \State \(s_2 \gets I(B \cup \{x\})\)
\Else
  \State \(s_2 \gets \textsc{FindMaxCactus}(B \cup \{x\})\)
\EndIf

\State \Return \(s_1 + s_2\)
\end{algorithmic}
\end{algorithm}


Intuitively, \textsc{FindMaxCactus}\((X)\) considers all choices of a cut
vertex \(x \in X\) of a cactus subgraph of \(G[X]\)
and all partitions of \(X \setminus \{x\}\) into two
nonempty sets \(A\) and \(B\) such that the induced subgraphs
\(G[A \cup \{x\}]\) and \(G[B \cup \{x\}]\) are connected.
For each such choice, it combines optimal cactus solutions on the two
vertex sets \(A \cup \{x\}\) and \(B \cup \{x\}\).
The procedure \textsc{FindCutCactus}\((x,A,X)\) computes the maximum size of a
spanning subgraph obtained by joining optimal cactus subgraphs of
\(G[A \cup \{x\}]\) and \(G[B \cup \{x\}]\) at the common vertex \(x\).
The correctness of the two procedures follows from \cref{thm:thm1} and
\cref{thm:thm2}.

We now analyse the running time of the procedures
\textsc{FindMaxCactus} and \textsc{FindCutCactus}.

\begin{theorem}
\label{thm:runtime}
Let \(G=(V,E)\) be a graph with \(n := |V|\).
The procedures \textsc{FindMaxCactus} and \textsc{FindCutCactus} compute
the values \(I(X)\) for all vertex sets \(X \subseteq V\) such that
\(G[X]\) is connected in time
\[
\mathcal{O}^{\star}(3^n).
\]
\end{theorem}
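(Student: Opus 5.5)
The plan is to count the total work by charging it to triples \((X, x, A)\), after fixing the order of evaluation.

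First, the order. We make the memoisation explicit: the values \(I(X)\) are computed for connected sets \(X\) in nondecreasing order of \(|X|\), and each value is stored once computed. Every recursive call made by \textsc{FindCutCactus}\((x,A,X)\) is on \(A \cup \{x\}\) or \(B \cup \{x\}\). Both sets are strictly smaller than \(X\), because \(A\) and \(B\) are nonempty. So when \(X\) is processed, each call is a constant-time table lookup. We also need to check that \textsc{FindMaxCactus} stores its result, so that repeated calls cost nothing. With the size ordering this holds trivially, and the recursion depth stays polynomially bounded anyway.

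Second, the base cases and per-set overhead.
\begin{itemize}
\item There are at most \(2^n\) sets \(X\).
\item For each one, testing connectivity of \(G[X]\) and testing whether \(G[X]\) is a cactus take polynomial time, using Hopcroft--Tarjan block decomposition.
\item For \(|X| \le 5\), the brute force over all edge subsets of \(G[X]\) (at most \(2^{10}\) of them) and over all pairs \((x,A)\) takes constant time per set.
\end{itemize}
Altogether this contributes \(\mathcal{O}^{\star}(2^n)\).

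Third, the main count, which is the only real step. The work done by \textsc{FindMaxCactus}\((X)\), summed over all \(X\), is bounded by the number of triples \((X, x, A)\) with \(x \in X\) and \(A \subseteq X \setminus \{x\}\). Each such triple costs polynomial time: computing \(B\), running two connectivity tests, and doing two lookups. To count the triples, we fix \(x\) (giving a factor \(n\)) and assign each vertex \(v \neq x\) one of three labels: \(v \notin X\), \(v \in A\), or \(v \in B = X \setminus (A \cup \{x\})\). These labelled assignments are in bijection with the pairs \((X, A)\) for which \(x \in X\). Hence the number of triples is at most \(n \cdot 3^{n-1}\), equivalently \(\sum_{k} \binom{n}{k}\, k\, 2^{k-1} = n\,3^{n-1}\). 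Multiplying by the polynomial cost per triple gives \(\mathcal{O}^{\star}(3^n)\).

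Adding the \(\mathcal{O}^{\star}(2^n)\) overhead from the second step, the total running time is \(\mathcal{O}^{\star}(3^n)\).

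The only point I expect to need care is that the recursive calls in \textsc{FindCutCactus} do not trigger recomputation. This is settled by the memoisation and the size ordering from the first step: with both in place, the count above covers all the work.
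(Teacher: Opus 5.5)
Your proposal is correct and follows the same approach as the paper: charge polynomial work to each pair \((x,A)\) for each set \(X\) and sum via the binomial theorem. Your three-label count \(n\cdot 3^{n-1}=\sum_k \binom{n}{k}k\,2^{k-1}\) is just the exact form of the paper's \(\sum_k\binom{n}{k}2^k=3^n\). You also make the memoisation explicit (size-ordered evaluation with stored values), which the paper tacitly relies on when it says the recursive \textsc{FindMaxCactus} calls add no exponential factor, since its pseudocode as written does not cache results.
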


\begin{proof}
The dynamic programming table is indexed by vertex sets
\(X \subseteq V\) such that the induced subgraph \(G[X]\) is connected.
In the worst case, every subset of \(V\) induces a connected subgraph,
and hence the number of subproblems is at most \(2^n\).

Consider a fixed set \(X \subseteq V\) with \(|X| = k\).
If \(|X| \leq 5\) or \(G[X]\) is a cactus, the value \(I(X)\) is returned
in polynomial time.
Otherwise, the procedure \textsc{FindMaxCactus}\((X)\) iterates over all
vertices \(x \in X\) with \(\deg_{G[X]}(x) \geq 2\) and over all subsets
\(A \subseteq X \setminus \{x\}\).
For each such choice, it performs polynomial-time tests for
nonemptiness and connectivity of the induced subgraphs
\(G[A \cup \{x\}]\) and \(G[B \cup \{x\}]\), where
\(B := X \setminus (A \cup \{x\})\), and makes a call to
\textsc{FindCutCactus}\((x,A,X)\).

Thus, for a fixed set \(X\) of size \(k\), the total work performed by
\textsc{FindMaxCactus}\((X)\) is bounded by
\[
\mathcal{O}^{\star}(k \cdot 2^{k-1}) = \mathcal{O}^{\star}(2^k).
\]

The procedure \textsc{FindCutCactus}\((x,A,X)\) invokes the procedure
\textsc{FindMaxCactus} on the strictly smaller vertex sets
\(A \cup \{x\}\) and \(B \cup \{x\}\), and otherwise performs only
polynomial-time operations.
Hence, the running time of \textsc{FindCutCactus} is dominated by the
running time of these calls to \textsc{FindMaxCactus}.
In particular, \textsc{FindCutCactus} does not contribute any additional
exponential factor beyond that already accounted for in the analysis of
\textsc{FindMaxCactus}.

Summing over all vertex sets \(X \subseteq V\), the total running time is
bounded by
\[
\mathcal{O}^{\star}(\sum_{X \subseteq V} 2^{|X|})
= \mathcal{O}^{\star}\!\left( \sum_{k=0}^{n} \binom{n}{k} 2^{k} \right).
\]
Using the binomial identity
\(\sum_{k=0}^{n} \binom{n}{k} 2^{k} = (1+2)^{n} = 3^{n}\),
we obtain a total running time of \(\mathcal{O}^{\star}(3^{n})\).

Therefore, the combined execution of \textsc{FindMaxCactus} and
\textsc{FindCutCactus} runs in time \(\mathcal{O}^{\star}(3^{n})\).
\end{proof}

This proves the following theorem.

\deletiontocactus*

%% file: cycle-containing-edges.tex
\cycleedges*

Before proving the above theorem, we first establish a
few observations that will be used in the proof.

\begin{lemma}
  \label{lm:thirdcycle}
  Let \(G = (V,E)\) contain two distinct cycles \(C_1\) and \(C_2\) that
  share at least one edge. Then there exists a third cycle that is
  distinct from both \(C_1\) and \(C_2\). (As shown in \cref{fig:thirdcycle}.)
\end{lemma}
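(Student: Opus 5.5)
The plan is to build the third cycle from the symmetric difference of the two given cycles. Write \(C_1\) and \(C_2\) for the two distinct cycles sharing an edge, and consider the edge set \(D := E(C_1) \,\Delta\, E(C_2)\).

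First, I will argue that \(D\) is nonempty. Two distinct cycles cannot have the same edge set, since a cycle is determined by its edges. Moreover, neither edge set is contained in the other, because a cycle has no proper subgraph that is itself a cycle. So \(D \neq \emptyset\).

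Second, I will use that every vertex has even degree in each of \(C_1\) and \(C_2\), namely \(0\) or \(2\). The degree of a vertex \(v\) in the edge-induced graph \((V,D)\) equals \(\deg_{C_1}(v) + \deg_{C_2}(v) - 2\,|\{\text{edges of } C_1 \cap C_2 \text{ at } v\}|\), which is even. Hence \((V,D)\) is a nonempty graph with all degrees even, and it therefore contains a cycle \(C_3\): take a longest path and note that its endpoint has a second neighbour on the path.

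Third, I will check that \(C_3\) is distinct from both \(C_1\) and \(C_2\). By hypothesis some edge \(e\) lies in \(E(C_1) \cap E(C_2)\), so \(e \notin D\). Thus \(C_3\) avoids at least one edge of \(C_1\), and since \(C_3 \neq C_1\) as edge sets, \(C_3\) is not \(C_1\). The same argument shows \(C_3\) is not \(C_2\).

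The main subtlety is the first step. One must justify that the symmetric difference is nonempty, and in fact contains edges of both cycles, rather than merely that the cycles are distinct. This follows from minimality of cycles: a cycle has no proper sub-cycle, so \(E(C_1)\not\subseteq E(C_2)\). The remaining steps are the routine even-degree argument.
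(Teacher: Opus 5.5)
Your proof is correct, but it takes a different route from the paper's. The paper picks a maximal segment \(P = u_1 u_2 \ldots u_l\) of edges common to \(C_1\) and \(C_2\). It notes that \(C_1 - E(P)\) and \(C_2 - E(P)\) are two distinct \(u_1\)--\(u_l\) paths, distinct because otherwise \(C_1 = C_2\). Their union, and hence \(G - E(P)\), therefore contains a cycle, which differs from \(C_1\) and \(C_2\) because it misses the edges of \(P\).

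That argument is local and path-based, in the spirit of the path-splicing lemmas in the appendix. However, it relies on the unproved standard fact that two distinct paths with common endpoints contain a cycle in their union. Also, its cycle is only guaranteed to avoid \(E(P)\); it may reuse other shared edges.

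Your parity argument handles all of \(E(C_1)\,\Delta\,E(C_2)\) at once and is self-contained, since the longest-path step replaces the two-paths fact. It also gives a slightly stronger conclusion: the new cycle avoids \emph{every} edge common to \(C_1\) and \(C_2\). This is the cycle-space fact that the symmetric difference of two cycles is an even-degree edge set and hence contains cycles.

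One small remark: the step you call the main subtlety is easier than you suggest. \(E(C_1)\,\Delta\,E(C_2) \neq \emptyset\) follows directly from \(E(C_1) \neq E(C_2)\), so the fact that neither edge set contains the other is true but not needed.
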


\begin{figure}[h]
\centering
\begin{tikzpicture}[
  scale=1.2,
  every node/.style={circle, fill=black, inner sep=1.2pt}
]

\node [label=left:\(u_1\)](u1) at (0,1) {};
\node [label=below:\(P\)](u2) at (1,2) {};
\node [label=left:\(u_3\)](u3) at (2,1) {};
\node [label=left:\(C_1\)](u4) at (1,3) {};
\node (u5) at (3,2) {};
\node [label=right:\(C_2\)](u6) at (3,0) {};

\draw[thick,red] (u1) -- (u2) -- (u3);

\draw (u3) -- (u4) -- (u1);

\draw (u3) -- (u5) -- (u6) -- (u1);

\end{tikzpicture}
\caption{\small Two distinct cycles \(C_1\) and \(C_2\) share the path \(P\).
Removing the edges of \(P\) leaves two distinct paths between \(u_1\) and \(u_3\),
which together form a third cycle \(C_3\).}
\label{fig:thirdcycle}
\end{figure}
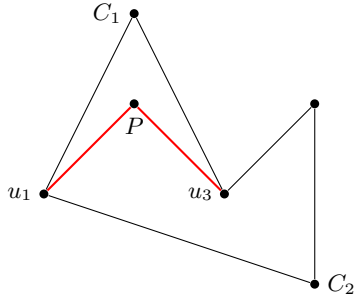

\begin{proof}
Consider a maximal connected subgraph of \(G\), say \(P\), whose edges
are all common to both \(C_1\) and \(C_2\).
Then \(P\) is a path in \(G\).
Write
\[
P = u_1 u_2 \ldots u_l .
\]
Note that both \(u_1\) and \(u_l\) belong to \(V(C_1) \cap V(C_2)\).

Now consider the graph
\[
G - \{u_1u_2, u_2u_3, \ldots, u_{l-1}u_l\},
\]
that is, the subgraph of \(G\) obtained by deleting the edges of \(P\).
This graph contains two paths between \(u_1\) and \(u_l\), namely
\[
P_1 := C_1 - E(P) \quad \text{and}
\quad P_2 := C_2 - E(P).
\]
These two paths must be distinct; otherwise, we would have
\(C_1 = C_2\) in \(G\), contrary to assumption.

Hence, the graph \(G - \{u_1u_2, u_2u_3,
\ldots, u_{l-1}u_l\}\) contains a
cycle.
Any such cycle is necessarily distinct from both \(C_1\) and \(C_2\).
\end{proof}

\begin{lemma}
  \label{lm:struct1}
  Let \(G = (V,E)\) be an edge-minimal non-cactus connected graph; that
  is, for every \(e \in E\), the graph \(G - e\) is a cactus.
  Let \(C_1\) and \(C_2\) be two cycles in \(G\) that share at least one
  edge. Then
  \[
  E \subseteq E(C_1) \cup E(C_2).
  \]
\end{lemma}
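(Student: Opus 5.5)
The plan is to argue by contradiction using edge-minimality. The idea is that the subgraph \(C_1 \cup C_2\) is already a non-cactus witness, so no edge outside it can be essential.

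Concretely, suppose there is an edge \(e \in E\) with \(e \notin E(C_1) \cup E(C_2)\). I will look at the graph \(G - e\). Since \(e\) lies on neither cycle, both \(C_1\) and \(C_2\) are still cycles of \(G - e\). They are distinct and share at least one edge, so some edge of \(G - e\) lies on two distinct cycles. By the definition used in the paper, \(G - e\) is therefore not a cactus.

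The only remaining subtlety is connectivity, since a cactus is defined to be connected. This does not matter here: the failure comes from the cycle condition, not from connectivity. Whether or not \(G - e\) is connected, it contains two distinct cycles sharing an edge, so it is not a cactus. This contradicts the hypothesis that \(G - e\) is a cactus for every \(e \in E\). Hence no such \(e\) exists, and \(E \subseteq E(C_1) \cup E(C_2)\).

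I do not expect a real obstacle. The one point to state carefully is that deleting an edge not on \(C_1\) or \(C_2\) leaves both cycles intact as subgraphs, so they remain distinct and still share an edge. That is immediate because cycles are determined by their edge sets. \cref{lm:thirdcycle} is not needed for this statement, though it will presumably be used later together with this lemma.
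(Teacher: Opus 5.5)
Your proposal is correct and is the paper's proof: an edge outside \(E(C_1)\cup E(C_2)\) could be deleted while \(C_1\) and \(C_2\) remain and still share an edge, contradicting edge-minimality. You are right to make explicit that the statement implicitly needs \(C_1 \neq C_2\), and right that connectivity of \(G-e\) plays no role.
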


\begin{proof}
Suppose, for a contradiction, that there exists an edge
\(e' \in E\) such that \(e' \notin E(C_1) \cup E(C_2)\).
Consider the graph \(G - e'\).
In this graph, both cycles \(C_1\) and \(C_2\) are still present and
share at least one edge.
Hence, \(G - e'\) is not a cactus.
This contradicts the edge-minimality of \(G\).
\end{proof}

\begin{lemma}
  \label{lm:struct2}
Let \(G = (V,E)\) be an edge-minimal non-cactus connected graph.
Let \(C_1\) and \(C_2\) be two cycles in \(G\) that share at least one
edge.
Then, for every cycle \(C\) of \(G\), we have
\[
E(C) \cap E(C_1) \neq \emptyset.
\]
\end{lemma}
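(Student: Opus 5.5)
The plan is a short contradiction argument built on \cref{lm:struct1}. Suppose, for a contradiction, that some cycle \(C\) of \(G\) satisfies \(E(C) \cap E(C_1) = \emptyset\). Since \(C_1\) and \(C_2\) share at least one edge and \(G\) is edge-minimal non-cactus, \cref{lm:struct1} gives \(E \subseteq E(C_1) \cup E(C_2)\). In particular \(E(C) \subseteq E(C_1) \cup E(C_2)\). Because \(C\) avoids every edge of \(C_1\), this forces \(E(C) \subseteq E(C_2)\).

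Next I would use an elementary fact: if the edge set of a cycle \(C\) is contained in the edge set of a cycle \(C_2\), then \(C = C_2\). The reason is that deleting even one edge from \(C_2\) leaves a path, and every subgraph of a path is a forest. Hence no cycle uses a proper subset of \(E(C_2)\), and so \(E(C) = E(C_2)\).

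This gives the contradiction. By hypothesis \(C_2\) shares at least one edge with \(C_1\), so \(E(C) \cap E(C_1) = E(C_2) \cap E(C_1) \neq \emptyset\). That contradicts the choice of \(C\). Therefore every cycle \(C\) of \(G\) meets \(E(C_1)\).

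There is no serious obstacle here. The only point to state carefully is the fact that a cycle cannot be properly contained in another cycle. I would justify it in one line by noting that a proper subgraph of a cycle is acyclic.
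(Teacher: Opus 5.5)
Your proof is correct and is essentially the paper's argument. The paper assumes a cycle \(\hat C\) edge-disjoint from \(C_1\), notes \(\hat C \neq C_2\), picks an edge \(\hat e \in E(\hat C)\) with \(\hat e \notin E(C_1) \cup E(C_2)\), and deletes it to contradict edge-minimality; you reach the same contradiction by citing \cref{lm:struct1}, whose proof is exactly that deletion step, and then forcing \(C = C_2\) (and you also justify explicitly that no cycle's edge set is properly contained in another's, which the paper uses without comment).
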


\begin{proof}
Suppose, for a contradiction, that there exists a cycle \(\hat{C}\) such
that
\[
E(\hat{C}) \cap E(C_1) = \emptyset.
\]
Then \(\hat{C}\) is distinct from both \(C_1\) and \(C_2\).
Consequently, there exists an edge \(\hat{e} \in E(\hat{C})\) such that
\(\hat{e} \notin E(C_2)\).
Moreover, \(\hat{e} \notin E(C_1)\), since
\(E(\hat{C}) \cap E(C_1) = \emptyset\).

Consider the graph \(G - \hat{e}\).
In this graph, the cycles \(C_1\) and \(C_2\) are still present and they
share at least one edge.
Hence, \(G - \hat{e}\) is not a cactus.
This contradicts the edge-minimality of \(G\).

Therefore, for every cycle \(C\) of \(G\), we must have
\(E(C) \cap E(C_1) \neq \emptyset\).
\end{proof}

\begin{lemma}
  \label{lm:struct3}
Let \(G = (V,E)\) be an edge-minimal non-cactus connected graph.
Let \(C_1\) and \(C_2\) be two cycles in \(G\) such that
\(E(C_1) \cap E(C_2) \neq \emptyset\), and let \(C_3\) be a distinct
cycle in \(G\).
Then
\[
E(C_1) \Delta E(C_2) \subseteq E(C_3).
\]
\end{lemma}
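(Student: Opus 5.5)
The plan is to argue by contradiction using edge-minimality, in the same style as \cref{lm:struct1} and \cref{lm:struct2}. Suppose some edge \(e \in E(C_1) \Delta E(C_2)\) is not in \(E(C_3)\). By symmetry between \(C_1\) and \(C_2\), we may assume \(e \in E(C_1) \setminus E(C_2)\). Consider \(G - e\); by edge-minimality it is a cactus. The cycles \(C_2\) and \(C_3\) both survive in \(G - e\), since \(e \notin E(C_2)\) and \(e \notin E(C_3)\). So if \(C_2\) and \(C_3\) share an edge, \(G - e\) is not a cactus, which is a contradiction. The task therefore reduces to showing that \(E(C_2) \cap E(C_3) \neq \emptyset\).

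For this, apply \cref{lm:struct2} with the roles of \(C_1\) and \(C_2\) swapped. The hypothesis of \cref{lm:struct2} is symmetric, since it only asks for two cycles sharing an edge. Hence every cycle of \(G\), in particular \(C_3\), shares an edge with \(C_2\). Since \(C_3\) is distinct from \(C_2\), the cycles \(C_2\) and \(C_3\) are two distinct cycles of \(G - e\) sharing an edge. Thus \(G - e\) is not a cactus, contradicting minimality. The symmetric case \(e \in E(C_2) \setminus E(C_1)\) uses \cref{lm:struct2} exactly as stated, giving \(E(C_3) \cap E(C_1) \neq \emptyset\), and the argument is identical.

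The main thing to check is whether \cref{lm:struct2} legitimately applies with \(C_2\) in place of \(C_1\). Its proof uses \(C_1\) and \(C_2\) symmetrically, except for the step producing an edge \(\hat{e}\) outside both cycles. That step works equally well with the roles swapped, so I expect no real obstacle there.

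A second point to verify is that "distinct cycle" means \(C_3 \neq C_1\) and \(C_3 \neq C_2\). Only \(C_3 \neq C_2\) (respectively \(C_3 \neq C_1\)) is needed in each case, to ensure the two surviving cycles are genuinely two different cycles sharing an edge.
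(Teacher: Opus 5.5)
Your proposal is correct and matches the paper's proof: the paper also uses \cref{lm:struct2} to get $E(C_1)\cap E(C_3)\neq\emptyset$ and $E(C_2)\cap E(C_3)\neq\emptyset$, with the second tacitly applying the lemma with $C_1$ and $C_2$ swapped. It then splits into the cases $e'\in E(C_1)$ and $e'\in E(C_2)$ and deletes $e'$, leaving two surviving cycles that share an edge. Your explicit checks that \cref{lm:struct2} applies symmetrically and that $C_3$ is distinct from the relevant $C_i$ just spell out what the paper leaves implicit.
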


\begin{proof}
By \cref{lm:struct2}, we have
\[
E(C_1) \cap E(C_3) \neq \emptyset \text{ and } E(C_2) \cap E(C_3) \neq \emptyset
\]
Suppose, for a contradiction, that there exists an edge
\(e' \in E(C_1) \Delta E(C_2)\) such that \(e' \notin E(C_3)\).
There can be two cases.

\textbf{Case \(1\):} \(e' \in E(C_1)\).

Consider the graph \(G - e'\).
In this graph, the cycles \(C_2\) and \(C_3\) are both present and satisfy
\(E(C_2) \cap E(C_3) \neq \emptyset\).
Hence, \(G - e'\) is not a cactus.

\textbf{Case \(2\):} \(e' \in E(C_2)\).

Consider the graph \(G - e'\).
In this graph, the cycles \(C_1\) and \(C_3\) are both present and satisfy
\(E(C_1) \cap E(C_3) \neq \emptyset\).
Hence, \(G - e'\) is not a cactus.

Either case contradicts the edge-minimality of \(G\).

\end{proof}

\cycleedges*

\begin{proof}
Since \(G\) is a non-cactus, there exist two distinct cycles \(C_1\) and \(C_2\)
such that \(E(C_1) \cap E(C_2) \neq \emptyset\).
Consider two arbitrary edges \(e_1\) and \(e_2\).
By \cref{lm:struct1}, both \(e_1\) and \(e_2\) belong to
\(E(C_1) \cup E(C_2)\).
If both edges lie in \(E(C_1)\) or both lie in \(E(C_2)\), then there is
nothing to prove.

Now consider the case in which
\(e_1 \in E(C_1) \setminus E(C_2)\) and
\(e_2 \in E(C_2) \setminus E(C_1)\).
Since the cycles \(C_1\) and \(C_2\) share edges, \cref{lm:thirdcycle}
implies that there exists a third cycle \(C_3\) distinct from both
\(C_1\) and \(C_2\).
By \cref{lm:struct3}, we have
\[
E(C_1) \Delta E(C_2) \subseteq E(C_3).
\]
Therefore, there exists a cycle that contains both \(e_1\) and \(e_2\).
\end{proof}

%% file: maintreetheorem.tex
\begin{lemma}
  \label{lm:cyclicgraph1}
  Let \(G\) be a connected graph. Suppose that there exists distinct vertices
  \(x, y \in V(G)\) and two distinct edge-disjoint paths \(P_1\) and
  \(P_2\) with endpoints \(x\) and \(y\) such
  that
  \[
  E(G) = E(P_1) \cup E(P_2).
  \]
  Then every edge of \(G\) lies in a cycle. (See \cref{fig:onlycycles}.)
\end{lemma}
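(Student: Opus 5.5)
The plan is to show that \(G\) is exactly the union of the cycle \(P_1 \cup P_2\) with nothing else, after which every edge trivially lies in that cycle. Care is needed because \(P_1\) and \(P_2\) are edge-disjoint but may share internal vertices, so \(P_1 \cup P_2\) need not itself be a simple cycle. So instead I will argue edge by edge.

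Fix an edge \(e = uv \in E(G)\). Without loss of generality \(e \in E(P_1)\). Since \(P_1\) and \(P_2\) are edge-disjoint, \(e \notin E(P_2)\).

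Removing \(e\) from \(P_1\) splits \(P_1\) into two subpaths: \(Q_x\) from \(x\) to one endpoint of \(e\), say \(u\), and \(Q_y\) from \(v\) to \(y\). None of \(Q_x\), \(Q_y\), \(P_2\) uses \(e\). Hence in \(G - e\) the walk \(Q_x\) (reversed), followed by \(P_2\), followed by \(Q_y\) (reversed), goes from \(u\) to \(x\), then to \(y\), then to \(v\). So \(u\) and \(v\) are joined by a walk in \(G - e\), and any such walk contains a simple \(u\)–\(v\) path \(R\) avoiding \(e\). Then \(R\) together with \(e\) is a cycle containing \(e\), because \(R\) has at least one edge (\(u \neq v\)) and the graph is simple, so it has length at least \(3\). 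The case \(e \in E(P_2)\) is symmetric.

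The only subtle point is extracting a simple path from a walk; this is the standard fact that a shortest \(u\)–\(v\) walk is a path. Note that the hypotheses "distinct" and "connected" are not really needed beyond \(x \neq y\) and edge-disjointness. Edge-disjointness is the key assumption: it guarantees that the detour through \(P_2\) avoids \(e\).
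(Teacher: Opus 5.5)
Your proof is correct and uses the same idea as the paper: both show that the endpoints of \(e\) stay connected in \(G - e\) by detouring through \(P_2\), which avoids \(e\) by edge-disjointness. The only difference is where the detour joins \(P_1\): the paper joins it at the shared vertices \(w, z\) of \(P_1\) and \(P_2\) on either side of \(e\), whereas you join it at \(x\) and \(y\) and then shorten the resulting walk to a path, which makes simplicity of the final cycle automatic. Your opening sentence about showing that \(G\) is a single cycle is abandoned by your own argument and can be dropped.
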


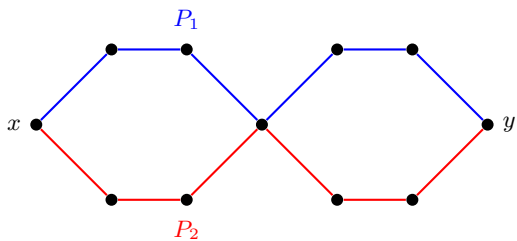
\begin{figure}[h]
    \centering
\begin{tikzpicture}[scale=1.0,
    vertex/.style={circle, fill=black, inner sep=1.6pt}]

\node[vertex,label=left:$x$] (x) at (0,0) {};
\node[vertex] (a1) at (1,1) {};
\node[vertex] (a2) at (2,1) {};
\node[vertex] (a3) at (3,0) {};

\node[vertex] (b1) at (1,-1) {};
\node[vertex] (b2) at (2,-1) {};

\node[vertex] (a4) at (4,1) {};
\node[vertex] (a5) at (5,1) {};
\node[vertex] (b4) at (4,-1) {};
\node[vertex] (b5) at (5,-1) {};

\node[vertex,label=right:$y$] (y) at (6,0) {};

\draw[thick, blue] (x) -- (a1) -- (a2) -- (a3) -- (a4) -- (a5) -- (y);
\node[blue] at (2,1.4) {$P_1$};

\draw[thick, red] (x) -- (b1) -- (b2) -- (a3) -- (b4) -- (b5) -- (y);
\node[red] at (2,-1.4) {$P_2$};

\end{tikzpicture}

\caption{\small A graph \(G\) with the properties as mentioned in the above lemma}
    \label{fig:onlycycles}
\end{figure}

\begin{proof}
Let \(e\) be an arbitrary edge of \(G\).
Since \(E(G) = E(P_1) \cup E(P_2)\), we have either \(e \in E(P_1)\) or
\(e \in E(P_2)\).
Without loss of generality, assume that \(e \in E(P_1)\), and let its
endpoints be \(u\) and \(v\).

Consider a maximal connected subgraph of \(P_1\) that contains the edge
\(e\).
This subgraph is a subpath of \(P_1\); denote its endpoints by \(w\) and
\(z\).
Then both \(w\) and \(z\) belong to \(V(P_1) \cap V(P_2)\).
Without loss of generality, assume that \(w\) is connected to \(u\)
(or \(w = u\)) and that \(z\) is connected to \(v\) (or \(z = v\)) in
the graph \(P_1 - e\).
Moreover, \(w\) is connected to \(z\) by a path in \(P_2\).

We now construct a path \(P'\) between \(u\) and \(v\) in \(G - e\) as
follows:
\begin{itemize}
  \item a path from \(u\) to \(w\) in \(P_1 - e\), unless \(w = u\);
  \item a path from \(w\) to \(z\) in \(P_2\);
  \item a path from \(z\) to \(v\) in \(P_1 - e\), unless \(z = v\).
\end{itemize}

Thus, \(P' \cup \{e\}\) is a cycle that contains the edge \(e\).
Since \(e\) was chosen arbitrarily, every edge of \(G\) belongs to a
cycle.

\end{proof}

\begin{lemma}
  \label{lm:cyclicgraph2}
  Let \(G\) be a connected graph.
  Suppose that there exist distinct vertices \(x, y \in V(G)\) and two
  distinct paths \(P_1\) and \(P_2\) with endpoints \(x\) and \(y\) such
  that
  \[
  E(G) = E(P_1) \cup E(P_2).
  \]
  Then every edge of \(G\) that belongs to \(E(P_1) \Delta E(P_2)\) lies
  in a cycle.
\end{lemma}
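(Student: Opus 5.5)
To prove \cref{lm:cyclicgraph2}, I would reduce it to \cref{lm:cyclicgraph1} by cutting the two paths into segments at the vertices where they meet.

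Fix an edge \(e = uv \in E(P_1) \Delta E(P_2)\). Without loss of generality \(e \in E(P_1) \setminus E(P_2)\). The goal is to exhibit a \(u\)--\(v\) walk in \(G - e\). Any such walk contains a \(u\)--\(v\) path avoiding \(e\), and that path together with \(e\) is a cycle. So it suffices to find a \(u\)--\(v\) path in \(G - e\), i.e.\ to show that \(e\) is not a bridge of \(G\).

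The construction mirrors \cref{lm:cyclicgraph1}. Traverse \(P_1\) from \(x\) towards \(y\), with \(u\) before \(v\). Let \(w\) be the last vertex of \(P_1\) at or before \(u\) that lies in \(V(P_2)\). Let \(z\) be the first vertex of \(P_1\) at or after \(v\) that lies in \(V(P_2)\). Both exist because \(x, y \in V(P_2)\), and \(w \ne z\) since \(w\) precedes \(e\) on \(P_1\) and \(z\) follows it.

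The subpath \(P_1[u,w]\) and the subpath \(P_1[v,z]\) both avoid \(e\). The subpath \(P_2[w,z]\) avoids \(e\) because \(e \notin E(P_2)\). Concatenating \(P_1[u,w]\), \(P_2[w,z]\) and \(P_1[z,v]\) gives a \(u\)--\(v\) walk in \(G - e\), and hence a cycle through \(e\). The case \(e \in E(P_2) \setminus E(P_1)\) is symmetric.

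The one point I expect to need care is that \(P_2[w,z]\) genuinely avoids \(e\). This is exactly where the hypothesis \(e \notin E(P_1) \cap E(P_2)\) is used: shared edges may be bridges, which is why the lemma restricts to the symmetric difference. The distinctness of \(P_1\) and \(P_2\) is only needed for the statement to be non-vacuous and plays no other role. Walks versus paths is a non-issue, since a walk in \(G - e\) from \(u\) to \(v\) shortens to a path. Choosing \(w\) to be literally the last common vertex is also unnecessary: any vertex of \(V(P_2)\) on the \(x\)-side of \(e\) works, for instance \(x\) itself, and likewise \(y\) on the other side. So one can simply take \(w = x\) and \(z = y\), which makes the argument essentially immediate.
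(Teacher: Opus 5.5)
Your proof is correct, but it takes a different route from the paper's.

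\textbf{The paper's route.} It first isolates the maximal subpath \(P'\) of \(P_1\) that contains \(e\) and no edge of \(P_2\). By maximality, the endpoints \(w,z\) of \(P'\) lie on \(P_2\). It then takes the \(w\)--\(z\) subpath \(P''\) of \(P_2\) and applies \cref{lm:cyclicgraph1} to the edge-disjoint pair \(P',P''\).

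\textbf{Your route.} You skip this reduction. The detour you build goes from \(u\) back along \(P_1\), across along \(P_2\), and forward along \(P_1\) to \(v\). It only needs the \(P_2\)-segment to avoid \(e\), and \(e\notin E(P_2)\) guarantees that. So edge-disjointness of the two paths is never needed, and you may simply take \(w=x\), \(z=y\).

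\textbf{What each buys.} Your version is shorter and self-contained. It also shows that the hypotheses \(E(G)=E(P_1)\cup E(P_2)\), connectivity of \(G\), and distinctness of \(P_1,P_2\) are inessential: the conclusion holds in any graph containing the two paths. The paper's version gains modularity by reusing \cref{lm:cyclicgraph1}, at the cost of the extra maximality step.

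One presentational point: your first sentence says you will reduce to \cref{lm:cyclicgraph1}, but you never invoke that lemma; you only reuse the idea of its proof. Rephrase that sentence accordingly.
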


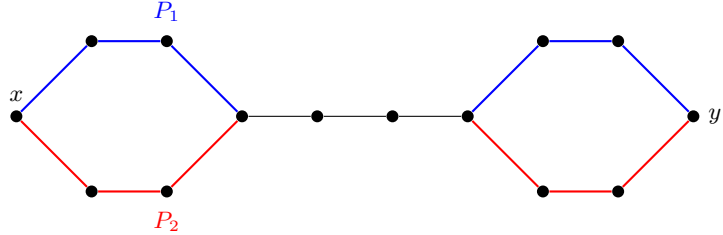
\begin{figure}[h]
    \centering
\begin{tikzpicture}[scale=1.0,
    vertex/.style={circle, fill=black, inner sep=1.6pt}]

\node[vertex,label=above:$x$] (x) at (0,0) {};
\node[vertex] (a1) at (1,1) {};
\node[vertex] (a2) at (2,1) {};
\node[vertex] (a3) at (3,0) {};

\node[vertex] (b1) at (1,-1) {};
\node[vertex] (b2) at (2,-1) {};

\node[vertex] (c1) at (4,0) {};
\node[vertex] (c2) at (5,0) {};
\node[vertex] (c3) at (6,0) {};
\node[vertex] (a4) at (7,1) {};
\node[vertex] (a5) at (8,1) {};
\node[vertex] (b4) at (7,-1) {};
\node[vertex] (b5) at (8,-1) {};

\node[vertex,label=right:$y$] (y) at (9,0) {};

\draw[thick, blue] (x) -- (a1) -- (a2) -- (a3);
\node[blue] at (2,1.4) {$P_1$};


\draw[thick, red] (x) -- (b1) -- (b2) -- (a3);
\node[red] at (2,-1.4) {$P_2$};

\draw[black] (a3) -- (c1) -- (c2) -- (c3);

\draw[thick, blue] (c3) -- (a4) -- (a5) -- (y);
\draw[thick, red] (c3) -- (b4) -- (b5) -- (y);

\end{tikzpicture}

\caption{\small A graph \(G\) with the properties as mentioned in the above lemma}
    \label{fig:nonlycycles}
\end{figure}

\begin{proof}
Let \(e\) be an arbitrary edge of \(G\) with
\(e \in E(P_1) \Delta E(P_2)\).
Without loss of generality, assume that \(e \in E(P_1)\), and let its
endpoints be \(u\) and \(v\).

Consider a maximal connected subgraph of \(P_1\) that contains the edge
\(e\) and contains no edges of \(P_2\); denote this subgraph by \(P'\).
Then \(P'\) is a subpath of \(P_1\).
Let its endpoints be \(w\) and \(z\).
By maximality, both \(w\) and \(z\) belong to \(V(P_1) \cap V(P_2)\).

Since \(w\) and \(z\) lie on both paths, there exists a path in \(P_2\)
between \(w\) and \(z\); denote this path by \(P''\).
By construction, we have
\[
E(P') \cap E(P'') = \emptyset,
\]
since \(P'\) contains no edges of \(P_2\).

Now consider the graph \(P' \cup P''\).
This graph is connected and has two distinct vertices \(w\) and \(z\)
joined by two edge-disjoint paths, namely \(P'\) and \(P''\).
Moreover, every edge of this graph belongs to \(E(P') \cup E(P'')\).
By \cref{lm:cyclicgraph1}, every edge of \(P' \cup P''\) lies in a cycle.

Since \(e \in E(P')\), it follows that \(e\) lies in a cycle.
As \(e\) was chosen arbitrarily, every edge in
\(E(P_1) \Delta E(P_2)\) lies in a cycle.
\end{proof}